\documentclass[12pt]{iopart}
\pdfoutput=1
\pdfminorversion=4

\usepackage{hyperref}

\usepackage{iopams}
\usepackage{amsopn}
\usepackage{amsthm}
\usepackage{amssymb}
\usepackage{dsfont}

\usepackage{graphicx}
\usepackage{caption}
\usepackage{subcaption}
\usepackage{epstopdf}

\usepackage{units}

\usepackage{array}
\usepackage{verbatim,indentfirst}
\usepackage[title,titletoc]{appendix}
\usepackage[nottoc,notlot,notlof]{tocbibind}

\usepackage[square,sort&compress,numbers]{natbib}

\newcounter{tlc}

\newtheorem{theorem}[tlc]{Theorem}
\newtheorem{lemma}[tlc]{Lemma}
\newtheorem{corollary}[tlc]{Corollary}

\newtheorem{definition}{Definition}
\newtheorem*{theorem*}{Theorem}
\newtheorem*{lemma*}{Lemma}
\newtheorem*{corollary*}{Corollary}

\makeatletter
\renewcommand{\p@subsection}{}
\renewcommand{\p@subsubsection}{}
\makeatother


\newcommand*{\ket}[1]{| #1 \rangle}

\newcommand*{\ketbra}[2]{| #1 \rangle\!\langle #2 |}

\newcommand{\kB}{k_\mathrm{B}}
\newcommand{\fwd}{\mathrm{fwd}}
\newcommand{\rev}{\mathrm{rev}}

\newcommand{\id}{\mathds{1}}

\DeclareMathOperator\mod{mod}

%
\usepackage{stmaryrd}
\makeatletter
\providecommand\underarrow@[3]{%
  \vtop{\vspace*{-0.95em}\ialign{##\crcr$\m@th\hfil#2{\;}{\scriptstyle #3}{\;}\hfil$\crcr
  \noalign{\nointerlineskip\kern.10\baselineskip}#1#2\crcr}}}
\providecommand{\underrightarrow}{%
  \mathpalette{\underarrow@\rightarrowfill@}}
\providecommand\rightarrowfill@{\arrowfill@\Mapstochar\relbar\rightarrow}
\providecommand\arrowfill@[4]{%
  $\m@th\thickmuskip0mu\medmuskip\thickmuskip\thinmuskip\thickmuskip
   \relax#4#1\mkern0mu%
   \cleaders\hbox{$#4\mkern-4mu#2\mkern-2mu$}\hfill
   \mkern-5mu#3$%
}

\makeatother

\newcommand{\caphead}[1]{{\bf #1}}
\newcommand{\figco}{} 

\newcommand{\strong}[1]{\textbf{#1}}

\hyphenation{quasi-cycle quasi-cycles}

\usepackage{chngcntr}
\let\oldappendices\appendices
\def\appendices{\def\theequation{\Alph{section}\arabic{equation}}%
\counterwithin*{equation}{section}
\oldappendices}

\begin{document}

%
%
\newcommand{\ourtitle}{Introducing one-shot work into fluctuation relations}
\title[\ourtitle]{\ourtitle}

%
%
\author{Nicole~Yunger~Halpern$^1$, Andrew~J.~P.~Garner$^{23}$, Oscar~C.~O.~Dahlsten$^2$, and Vlatko~Vedral$^{23}$}
\address{$^1$ Institute for Quantum Information and Matter, Caltech, \\ Pasadena, CA 91125, USA}
\address{$^2$ Atomic and Laser Physics, Clarendon Laboratory,
University of Oxford, Parks~Road, Oxford, OX13PU, United Kingdom}
\address{$^3$ Center for Quantum Technologies, National University of Singapore,
3~Science~Drive~2, 117543, Republic~of~Singapore}

\eads{nicoleyh@caltech.edu}
\date{\today}

%
%
%
%
\begin{abstract}
Two approaches to small-scale and quantum thermodynamics are fluctuation relations and one-shot statistical mechanics. Fluctuation relations (such as Crooks' Theorem and Jarzynski's Equality) relate nonequilibrium behaviors to equilibrium quantities such as free energy. 
One-shot statistical mechanics involves statements about every run of an experiment, not just about averages over trials.

We investigate the relation between the two approaches. 
We show that both approaches feature the same notions of work and the same notions of probability distributions over possible work values. 
The two approaches are alternative toolkits with which to analyze these distributions.
To combine the toolkits, we show how one-shot work quantities can be defined and bounded in contexts governed by Crooks' Theorem.
These bounds provide a new bridge from one-shot theory to experiments originally designed for testing fluctuation theorems.
\end{abstract}

\pacs{
05.70.Ln, 
05.40.-a, 
03.67.-a 
05.20.-y 
}



\maketitle

\tableofcontents

\clearpage
\newpage

\markboth{\ourtitle}{\ourtitle}

\section{Introduction}
The probabilistic nature of thermalization prevents us from deterministically predicting the amount of work performed on a system in any given run of an experiment. 
This stochasticity necessitates a statistical treatment of work, especially when the deviation from the mean value of work is large.
Two popular frameworks employed for this purpose are {\em fluctuation theorems}~\cite{Jarzynski97,Crooks99,Kurchan00, Tasaki00, EngelN07, TalknerLH07, TalknerH07, QuanD08, CampisiTH09, TalknerCH09, CampisiHT11, EspositoHM09, HideV10, CohenI12, DornerCHFGV13} and {\em one-shot statistical mechanics}~\cite{Renner05,DahlstenRRV11,EgloffDRV12,Aberg13,HorodeckiO13,SkrzypczykSP14,BrandaoHNOW15,YungerHalpernR14}.
The former framework's purpose is to quantify the behaviors of nonequilibrium classical and quantum systems. 
The latter framework concerns statements true of every trial (realization) of an experiment.

The relationship between these frameworks has been unclear (though work by \AA{}berg~\cite{Aberg13} suggests that a connection could be fruitful).
We will demonstrate that these approaches are not competitors. Rather, the approaches are mutually compatible tools. 
Combined, they describe general thermal behaviors of small classical and quantum systems.

\begin{figure}[hbt]
\centering
\includegraphics[width=.65\textwidth, clip=true]{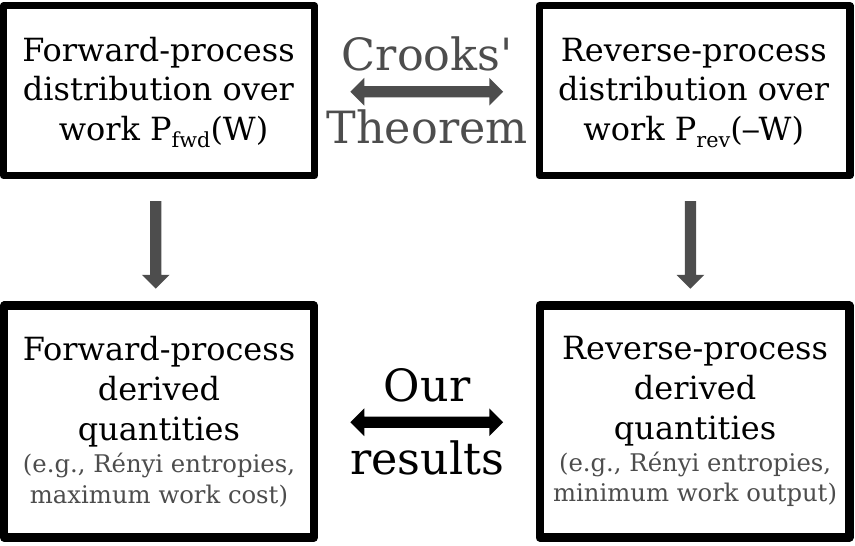}
\caption{
\caphead{A synopsis of how our results relate to Crooks' Theorem.}
Crooks' fluctuation theorem links a probability distribution $P_\fwd(W)$ over work expended during one process to the distribution $P_\rev(-W)$ over the work recouped during the reverse process. 
One-shot statistical mechanics concerns functions of probability distributions, such as R\'{e}nyi entropies.
We demonstrate how one-shot tools can be applied to problems governed by Crooks' Theorem.
This application allows us to calculate one-shot properties of the forward protocol from properties of the reverse, without the need to profile entire probability distributions. 
}
\label{fig:Synopsis}
\end{figure}

We will begin with a technical introduction to fluctuation theorems and one-shot statistical mechanics.
We then present our main claim: {\em that one-shot statistical mechanics can be applied to settings governed by fluctuation theorems.}
We substantiate this claim by generalizing the characteristic functions of the work probability distributions for classical and quantum systems. 
From this generalization, we derive bounds on one-shot work quantities in settings governed by fluctuation theorems.
We demonstrate how this generalization can be employed in two mathematical formalisms: a {\em work-extraction game}~\cite{EgloffDRV12,Aberg13} and {\em thermodynamic resource theories}~\cite{JanzingWZGB00,HorodeckiHHH09,BrandaoHORS13,HorodeckiO13,YungerHalpernR14,BrandaoHNOW15}.
To conclude with two pedagogical examples, we apply the generalization to specific fluctuation settings: {\em Landauer bit reset}~\cite{Szilard29,Landauer61,JanzingWZGB00,EgloffDRV12,Aberg13,BrowneGDV14} and {\em experimental DNA unzipping}~\cite{MossaMFHR09,ManosasMFHR09,AlemanyR10}.
The examples illustrate the opportunity to test one-shot results with experiments devised originally for fluctuation theorems.

\section{Preliminaries}
\subsection{Fluctuation theorems}
Consider a classical system that is coupled to a heat bath and driven externally.
Due to the probabilistic nature of thermalization, the amount of heat transferred between the system and the bath in any given trial cannot be predicted.
Hence the amount of work done by the drive, in any given trial, cannot be predicted.
The protocol can be associated with a {\em work distribution} $P(W)$, the probability density associated with some trial's costing an amount $W$ of work.
In equilibrium thermodynamics, $P(W)$ peaks tightly at the average value 
$W = \langle W \rangle$.
This value suffices to describe the work performed in each trial.
In more general, nonequilibrium, thermodynamics, the average does not suffice.
Yet thermodynamic state variables related to averages (temperature, free energy, etc.) are used in nonequilibrium thermodynamics. Fluctuation relations link these variables to probability distributions over work or heat.
We will focus mostly on continuous variables $W$, which have been used in classical and quantum contexts\footnote{
$W$ can be continuous even if the system has a discrete energy spectrum: Consider a system that interacts with the heat bath while two (or more) energy levels shift at different rates.
The system can jump from one level to another at any time, due to thermalization. The work cost of a trajectory that jumps at time $t$ can differ infinitesimally from the work cost of a trajectory that jumps at time $t+dt$.
} (e.g.,~\cite{TalknerLH07,QuanD08,TalknerCH09,DornerCHFGV13}).

One such fluctuation relation is Crooks' Theorem~\cite{Crooks99}.
Though originally derived in a classical setting, it has been shown to govern quantum processes~\cite{Kurchan00, Tasaki00, EngelN07, TalknerLH07, TalknerH07, QuanD08, CampisiTH09, TalknerCH09, CampisiHT11, EspositoHM09, CohenI12, DornerCHFGV13}. 
Crooks' Theorem describes the fluctuations in the work expended on systems subject to a time-changing Hamiltonian $H(\lambda_t)$ in the presence of a heat bath. An experimenter can change the external scalar parameter $\lambda_t$ during the time interval $t \in [-\tau, \tau]$ by performing work.
We denote the bath's inverse temperature by  $\beta = 1 / \kB T $.
The external driving can be performed in either a forward or a reverse direction. 
The forward process begins at time $-\tau$, when the system occupies the thermal state $e^{-\beta H_{-\tau} } / Z_{-\tau}$  of the initial Hamiltonian 
$H_{-\tau} \equiv H( \lambda_{-\tau} )$. 
The reverse process begins at time $\tau$, when the system occupies the thermal state $e^{ - \beta H_\tau } / Z_\tau$ associated with the final Hamiltonian $H_\tau  \equiv  H( \lambda_\tau )$ of the forward protocol.
($Z_{ \pm \tau }$ denote normalization factors.)

Suppose that an agent implements the protocol in both directions many times, measuring the work invested in each forward trial and the work extracted from each reverse.
Two probability distributions encapsulate these measurements: $P_\fwd (W)$ denotes the probability that some forward trial will require work $W$ (or the probability per unit work, if $P_\fwd$ denotes a probability density), and $P_\rev(-W)$ denotes the probability that some reverse trial will output work $W$. 

If the system's interactions with the bath are Markovian and {\em microscopically reversible},~\cite{Crooks98} and if the initial state is thermal, the work probability distributions satisfy \emph{Crooks' Theorem}~\cite{Crooks99},
\begin{equation}   \label{eq:CrooksThm}
   \frac{   P_\fwd(W)   }{   P_\rev(-W)   }
   =   e^{ \beta ( W - \Delta F ) }.
\end{equation}
This $\Delta F$ denotes the difference 
$F ( e^{-\beta H_{\tau} } / Z_{\tau} )  -  F ( e^{-\beta H_{-\tau} } / Z_{-\tau} )$
between the Helmholtz free energies between thermal states over the forward process's final and initial Hamiltonians.

Multiplying each side of Crooks' Theorem by $P_\rev(-W) e^{- \beta W}$ and integrating over $W$ yields \emph{Jarzynski's Equality}~\cite{Jarzynski97},
\begin{equation}   \label{eq:JarzEq}
   \left\langle   e^{ - \beta W  }  \right\rangle_{\rm fwd}
   =   e^{ - \beta \Delta F },
\end{equation}
wherein $\langle . \rangle_{\rm fwd}$ denotes an expectation value calculated from $P_{\rm fwd}$. 
Applied to a work distribution P(W) constructed from simulations or experiments, Jarzynski's Equality can be used to calculate the equilibrium quantity $\Delta F$. 
Combined with Jensen's Inequality, $\langle e^{x} \rangle   \geq   e^{ \langle x \rangle }$, Jarzynski's Equality implies a lower bound on the average work required to complete a trial. 
This bound, $\langle W \rangle \geq \Delta F$, has been considered a statement of the Second Law of Thermodynamics~\cite{Jarzynski08}.

The left-hand side (LHS) of Jarzynski's Equality has been recognized as the characteristic function, or Fourier transform, of $P_{\rm fwd}(W)$~\cite{TalknerLH07,DornerCHFGV13}. 
If $u = i \beta$ denotes the variable conjugate to $W$, the characteristic function is
\begin{equation} \label{eq:ChiFwd}
\hspace{-2em}    \chi_{\rm fwd}(\beta)
    \equiv   \mathcal{F} \{ P_{\rm fwd}(W) \}
    \equiv
    \int_{ -\infty }^\infty   dW   P_\fwd (W)    e^{ i u W }
    =   \int_{ -\infty }^\infty   dW
           P_\fwd (W)    e^{ - \beta W }.
\end{equation}
In terms of the characteristic function, Jarzynski's Equality reads,
\begin{equation}   \label{eq:JarzGFwd}
   \chi_{\rm fwd}(\beta)   =   e^{-\beta \Delta F}.
\end{equation}
The reverse process corresponds to the characteristic function 
$ \chi_{\rm rev}(\beta)
   \equiv   \int_{-\infty}^\infty   dW   P_{\rm rev}(W)   e^{- \beta W}$, in terms of which 
   $\chi_{\rm rev}( \beta )   =   e^{ \beta \Delta F}$.

\subsection{One-shot statistical mechanics}
Mean values do not necessarily reflect a system's {\em typical} behavior.
Consider a system that must output at least some threshold amount of work to trigger another process. One such threshold is the activation energy required to begin a chemical reaction. 
The system might output below-threshold work usually but far-above-threshold work occasionally. 
The average work might exceed the threshold, but the second process is usually not trigged.

By spotlighting statistics other than the mean, one-shot information theory extends idealized protocols implemented $n \to \infty$ times to realistic finite-$n$ protocols that might fail. 
Conventional statistical mechanics describes the optimal rate at which work can be extracted \emph{asymptotically}. 
Consider transforming $n$ copies of one equilibrium state into $n$ copies of another quasistatically, in the presence of a temperature-$T$ heat bath. In the asymptotic, or thermodynamic, limit as $n \to \infty$, the average work required per copy approaches the difference $\Delta F$ between the states' free energies.
The free energy depends on the Shannon entropy.
In reality, states are transformed finitely many times, and realistic processes have probabilities $\delta$ of failing to accomplish their purposes. 
Finite-$n$ work-consumption rates have been quantified with \emph{one-shot entropies}~\cite{DahlstenRRV11,HorodeckiO13,SkrzypczykSP14,BrandaoHNOW15,YungerHalpernR14}.
So have the efficiencies of finite-$n$ data compression, randomness extraction, quantum key distribution, and hypothesis testing~\cite{Renner05,RennerW04,Tomamichel12,DupuisKFRR12}.

One one-shot entropy is the order-$\infty$ R\'enyi entropy $H_\infty ( \mathcal{P} ) $, known also as the {\em min-entropy}. For any discrete probability distribution $\mathcal{P}$ whose greatest element is $\mathcal{P}^{\rm max}$,
\begin{equation}   \label{eq:HMinDiscrete}
   H_\infty (\mathcal{P}) \equiv   - \log ( \mathcal{P}^{\rm max} ).
\end{equation}
(All logarithms in this article are base-$e$.)
We will discuss two popular models in which one-shot entropies are applied to thermodynamics: {\em work-extraction games} and {\em thermodynamic resource theories}.

\subsubsection{Work-extraction game}
\label{sec:WEGameIntro}
~\\
In the work-extraction game described by Egloff \emph{et al.}~\cite{EgloffDRV12}, a player transforms a system in a state $\rho$, governed by a Hamiltonian $H_\rho$, into a state $\sigma$ governed by $H_\sigma$:
$(\rho, H_\rho)   \mapsto   (\sigma, H_\sigma)$.
For simplicity, we take a semiclassical model such that states are assumed to commute with their Hamiltonians.
The agent has access to a temperature-$T$ heat bath.

The player should choose an {\em optimal strategy} to maximize the  transformation's work output (or minimize the transformation's work cost).
The strategy consists of a sequence of operations of two types: (1) Without investing work, the player can couple the system to the bath in any manner modeled by a stochastic matrix that preserves the Gibbs state $e^{ - \beta H_\rho } / Z_\rho$. 
(Such thermalization models are discussed in Appendix~\ref{section:ThermModelsMain}.)
(2) By investing or extracting work, the agent can shift the Hamiltonian's levels.

The primary result in~\cite{EgloffDRV12} implies an upper bound on the work extractable (up to a probability $\delta$ of failure) during the transformation $(\rho, H_\rho)   \mapsto   (\sigma, H_\sigma)$. 
Egloff \emph{et al.}\ show that the {\em optimal strategy} has a probability $1 - \delta$ of outputting at least the work
\begin{equation}   \label{eq:EgloffDRV12Thm}
   {w}^{\delta}_{\rm best} ( \rho, H_\rho   \mapsto   \sigma, H_\sigma )
   =  \kB T  \log  \left(   M   \left(   
       \frac{  G^T(\rho) }{  1 - \delta  }  \:  ||  \:  G^T( \sigma )  
       \right)   \right)
\end{equation}
in each trial. 
$G^T$ denotes \emph{Gibbs-rescaling} relative to the temperature $T$. Gibbs-rescaling facilitates the comparison of the work values of states governed by different Hamiltonians. 
A state can have the capacity to perform work due to the state's information content (e.g., because the state is pure) and energy contents (e.g., because the state has weight on high energy levels). 
Gibbs-rescaling recasts the state's work capacity as entirely informational. 
This recasting facilitates the comparison of states governed by different Hamiltonians. $M$ denotes the \emph{relative mixedness}, a measure of how much more mixed one state is, or how much less information-sourced work capacity a state has. 
Dissipative processes yield less than the optimal amount $w^\delta_{\rm best}$ of work. 
Hence Eq.~(\ref{eq:EgloffDRV12Thm}) upper-bounds the amount that could be extracted with an arbitrary (possibly suboptimal) strategy.

\subsubsection{Thermodynamic resource theories}
~\\
Resource theories have been used to calculate how efficiently scarce quantities can be distilled and converted into other forms via cheap (or ``free'') operations~\cite{HorodeckiHHH09}. To an agent able to perform only certain operations for free, each state has some value, or worth. We can quantify this value with resource theories.
\emph{Thermodynamic resource theories} model exchanges of heat amongst systems and  baths~\cite{BrandaoHORS13,HorodeckiO13,BrandaoHNOW15,JanzingWZGB00,YungerHalpernR14}. 
Each resource theory is defined by the inverse temperature $\beta$ of a heat bath from which the agent can draw Gibbs states for free. 
More generally, energy-conserving \emph{thermal operations} can be performed for free. Nonequilibrium states have value because work can be extracted from them.

Horodecki and Oppenheim introduced one-shot tools into thermodynamic resource theories~\cite{HorodeckiO13}. 
They focused on semiclassical resource theories, in which states commute with the Hamiltonians that govern them.
Horodecki and Oppenheim calculated the minimum work required to create a state within trace distance $\varepsilon$ of a target state $\rho$. 
They also analyzed the transfer of work from $\rho$ to a battery defined by a Hamiltonian of gap $w$. 
The maximum $w$ such that the battery ends within trace distance $\delta$ of its excited state was shown to be related to a one-shot entropy of $\rho$. 
One-shot information theory has since been applied to \emph{catalysis} (the facilitation of a transformation by an ancilla)~\cite{BrandaoHORS13,BrandaoHNOW15}, to arbitrary baths such as particle baths~\cite{YungerHalpernR14}, and to quantum problems (that involve states that do not commute with the Hamiltonian)~\cite{LostaglioJR15,LostaglioKJR15}.

\section{Unification of fluctuation theorems and one-shot statistical mechanics}
\label{sec:Unification}
Fluctuation theorems and one-shot statistical mechanics concern properties of work distributions beyond averages.
The two frameworks do not compete to describe the same concept in alternative ways. 
Rather, the frameworks complement each other and can be combined into a general description of small-scale classical and quantum systems.
Fluctuation theorems are restricted to systems that satisfy certain physical assumptions and that can undergo forward and reverse protocols. Crooks' Theorem~\cite{Crooks99}, for example, relies on the dynamics' Markovianity and microscopic reversibility, and on the system's beginning in a thermal state. 
The tools of one-shot statistical mechanics (e.g.\ R\'enyi entropies, and bounds on work values in every trial of an experiment) can be applied more generally to the statistics produced by any system that consumes work.
The formalisms are not incompatible: {\em The tools of one-shot statistical mechanics can be applied to the work distribution of any process governed by Crooks' Theorem}.

We will substantiate this claim by focusing on the one-shot concept of {\em guaranteed work}: an upper bound (up to some error) on the work required to complete some process that applies not just on average, but in every trial.
We will define this quantity in contexts governed by Crooks' Theorem and will relate the quantity to the one-shot entropy $H_\infty$. 
Our results describe all quantum and classical systems whose thermalization satisfies microscopic reversibility and Markovianity and whose work distribution is continuous. 
(For details about these assumptions' realizations in two common one-shot frameworks, see Appendix~\ref{section:ThermModelsMain}.)

\subsection{One-shot work quantities in fluctuation contexts}
\label{section:WorkDefns}

\begin{figure}[tb]
\centering
\includegraphics[width=.7\textwidth, clip=true]{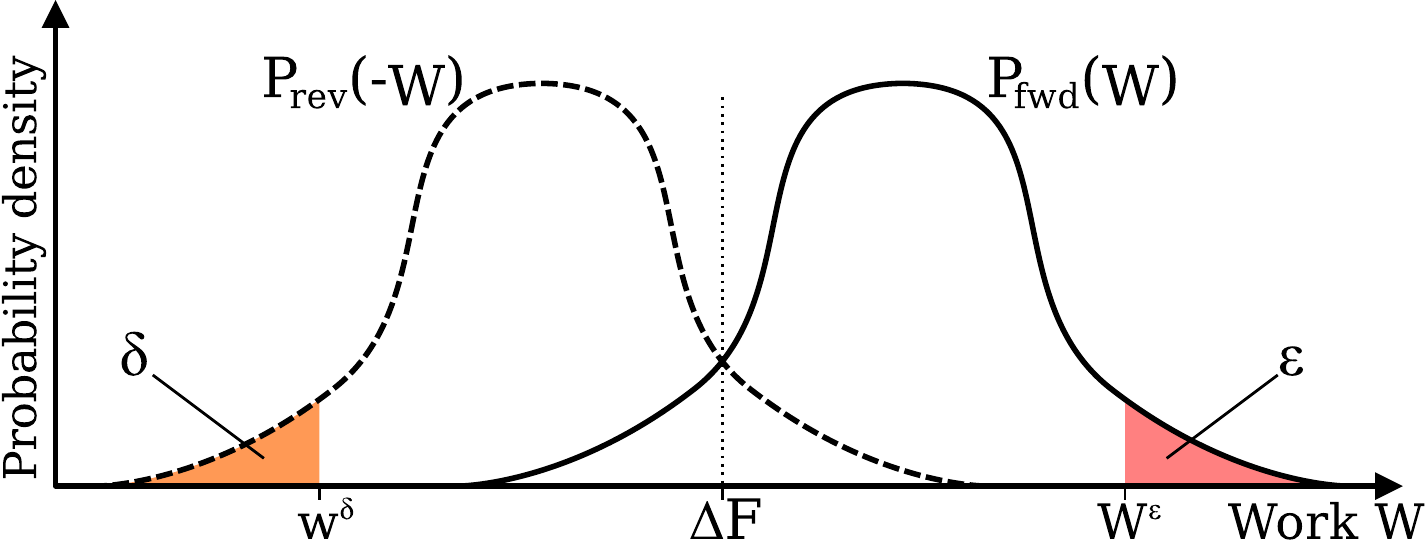}
\caption{
\figco
\caphead{One-shot work quantities $W^\varepsilon$ and $w^{\delta}$.}
The forward and reverse work distributions $P_\fwd(W)$ and $P_\rev(-W)$ intersect at $\Delta F$, the difference between the free energies of the Gibbs states associated with the forward protocol's initial and final Hamiltonians.
The shaded region under the right tail of the $P_\fwd(W)$ curve has an area $\varepsilon$, which is the probability that a forward process consumes more work than the {\em $\varepsilon$-required work} $W^\varepsilon$.
The shaded region under the left tail of the $P_\rev(-W)$ curve has an area $\delta$, which is the probability that the reverse process outputs less work than the {\em $\delta$-extractable work} $w^{\delta}$.
}
\label{fig:Distributions}
\end{figure}

Suppose we consider the behavior of a system evolving under a process that is driven by a single external parameter, and otherwise satisfies the conditions for Crooks' theorem to hold. 
For clarity in this article, we shall choose examples where the {\em forward process} tends to cost work to complete.
We will upper-bound the amount of work required to complete a single trial of a process successfully, such that this bound is only exceeded with probability $\varepsilon$ (as illustrated on the right-hand side (RHS) of Fig.~\ref{fig:Distributions}).

\begin{definition} 
\label{def:WEps}
Each implementation of the forward protocol has a probability $1 - \varepsilon$ of requiring no more work than the \mbox{\strong{$\mathbf{\varepsilon}$-required work}} $W^\varepsilon$ that satisfies
\begin{equation}   \label{eq:WEpsDefn1}
   \int_{-\infty}^{ W^\varepsilon }   dW   P_\fwd(W)  
   =   1 - \varepsilon.
\end{equation}
The trial has a probability ${\varepsilon} \in [0, 1]$ of requiring more work than $W^\varepsilon$.
\end{definition}

Similarly, we will lower-bound the amount of work extracted in the reverse process, for all but $\delta$ of the trials (illustrated on the LHS of Fig.~\ref{fig:Distributions}).

\begin{definition}
\label{def:wDelta}
Each implementation of the reverse protocol has a probability $1 - \delta$ of outputting at least the \mbox{\strong{${\delta}$-extractable work}} $w^\delta$ that satisfies
\begin{equation}   \label{eq:wEpsDefn1}
   \int_{ w^\delta }^\infty   dW   P_\rev(-W)  
   =   1  -  {\delta}.
\end{equation}
The trial has a probability $\delta  \in [0, 1]$ of outputting less work than $w^\delta$.
\end{definition}

The failure probability has two interpretations.
Suppose, in the work-investment case, that an agent invests only the amount $W^\varepsilon$ of work in a forward trial. 
The external parameter $\lambda_t$ has a probability $\varepsilon$ of failing to reach $\lambda_\tau$. 
Alternatively, suppose the agent invests all the work required to evolve $\lambda_t$ to $\lambda_\tau$. 
The agent has a probability $\varepsilon$ of overshooting the ``work budget'' $W^\varepsilon$.  
The failure probability $\delta$ associated with work extraction can be interpreted similarly.

\subsection{One-shot Jarzynski equalities} 

\label{sec:OneShotCFT}
Even if only forward trials have been performed, the reverse process's $w^\delta$ can be calculated from Crooks' Theorem. 

\begin{lemma}   \label{lemma:wEpsCalcn}
Each reverse trial has a probability $1 - {\delta}$ of outputting at least the amount $w^{\delta}$ of work that satisfies 
\begin{equation}   \label{eq:wEpsCalcn}
   \chi^{ \delta }_\fwd ( \beta )
    = (1   -   \delta)  e^{ - \beta \Delta F },
\end{equation}
wherein
\begin{equation}  \label{eq:ChiDDef}
   \chi^{ \delta }_\fwd (\beta)
   \equiv
   \int_{ w^\delta }^\infty   dW     P_\fwd (W)   e^{ - \beta W}
\end{equation}
generalizes the characteristic function $\chi_{\rm fwd}(\beta)$.

\begin{proof}
Upon multiplying each side of Crooks' Theorem [Eq.~(\ref{eq:CrooksThm})]  by 
$P_\rev(-W) e^{ - \beta W }$, we integrate from $w^\delta$ to infinity. 
The LHS equals $\chi^\delta_\fwd(\beta)$ by definition [Eq.~(\ref{eq:ChiDDef})].
The right-hand integral evaluates to $1 - \delta$ by Definition~\ref{def:wDelta}.
\end{proof}
\end{lemma}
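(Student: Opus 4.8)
The plan is to obtain the claimed identity directly from Crooks' Theorem [Eq.~(\ref{eq:CrooksThm})] by a single algebraic reweighting followed by an integration whose lower limit is chosen to coincide with $w^\delta$.

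First I would recast Crooks' Theorem in product form, clearing the denominator so that $P_\fwd(W) = P_\rev(-W)\, e^{\beta(W - \Delta F)}$. The goal is to manufacture, on one side, exactly the integrand $P_\fwd(W)\, e^{-\beta W}$ that appears in the definition of the generalized characteristic function [Eq.~(\ref{eq:ChiDDef})], and on the other side an expression whose tail integral is fixed by Definition~\ref{def:wDelta}. To achieve this I would multiply both sides of Crooks' Theorem by the common factor $P_\rev(-W)\, e^{-\beta W}$. On the left, the ratio $P_\fwd(W)/P_\rev(-W)$ cancels against $P_\rev(-W)$ and leaves $P_\fwd(W)\, e^{-\beta W}$; on the right, the exponential $e^{\beta(W-\Delta F)}$ multiplied by $e^{-\beta W}$ collapses to $e^{-\beta\Delta F}$, leaving $P_\rev(-W)\, e^{-\beta\Delta F}$. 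Hence I expect the pointwise identity $P_\fwd(W)\, e^{-\beta W} = P_\rev(-W)\, e^{-\beta\Delta F}$.

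Next I would integrate this identity over $W$ from $w^\delta$ to $\infty$. By Eq.~(\ref{eq:ChiDDef}) the left-hand integral is precisely $\chi^\delta_\fwd(\beta)$, while on the right the constant $e^{-\beta\Delta F}$ factors out, leaving $e^{-\beta\Delta F}\int_{w^\delta}^\infty dW\, P_\rev(-W)$. The decisive step is that the lower limit $w^\delta$ is the very value singled out by Definition~\ref{def:wDelta}, so this reverse tail integrates to $1-\delta$ by construction. Assembling the two sides gives $\chi^\delta_\fwd(\beta) = (1-\delta)\, e^{-\beta\Delta F}$, which is the assertion.

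I do not anticipate a genuine analytic obstacle; the argument is essentially a reweighting of Crooks' Theorem, and Jarzynski's Equality [Eq.~(\ref{eq:JarzEq})] is recovered as the special case $w^\delta \to -\infty$, $\delta \to 0$. The only point demanding care is consistency of the integration window: the lower limit in the definition of $\chi^\delta_\fwd$ must be taken to be exactly the $w^\delta$ fixed by Definition~\ref{def:wDelta}, so that the reverse integral evaluates to $1-\delta$ rather than to some unrelated quantity. Since the work distribution is assumed continuous, $w^\delta$ is well defined and the two limits align, so the identity follows at once.
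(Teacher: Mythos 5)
Your proposal is correct and is essentially identical to the paper's own proof: both multiply Crooks' Theorem by $P_\rev(-W)\,e^{-\beta W}$, integrate from $w^\delta$ to $\infty$, identify the left side as $\chi^\delta_\fwd(\beta)$ via Eq.~(\ref{eq:ChiDDef}), and evaluate the right-hand tail as $1-\delta$ via Definition~\ref{def:wDelta}. Your added remarks about recovering Jarzynski's Equality in the limit and about aligning the integration limits are sensible but not needed.
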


We can calculate $W^\varepsilon$ from $P_{\rm rev}(-W)$ via Crooks' Theorem in the same way:

\begin{lemma}   
\label{lemma:WEpsCalcnCrooks}
Each forward trial has a probability $1 - \varepsilon$ of requiring no more work than the $W^\varepsilon$ that satisfies
\begin{equation}   \label{eq:WEpsCalcnCrooks}
   \chi^{ \varepsilon }_\rev ( \beta )
   =   (1  -  \varepsilon)   e^{ \beta \Delta F },
\end{equation}
wherein
\begin{equation}   \label{eq:ChiEpsRev}
   \chi^{ \varepsilon }_\rev (\beta)
   \equiv
   \int_{ -W^\varepsilon }^\infty   dW      P_\rev (W)   e^{ - \beta W}
\end{equation}
generalizes the characteristic function $\chi_{\rm rev}(\beta)$.
\end{lemma}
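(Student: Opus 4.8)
The plan is to mirror the proof of Lemma~\ref{lemma:wEpsCalcn}, integrating a rearranged form of Crooks' Theorem against the forward distribution and then invoking Definition~\ref{def:WEps}. First I would rewrite Crooks' Theorem [Eq.~(\ref{eq:CrooksThm})] in the form $P_\rev(-W) = P_\fwd(W)\, e^{-\beta(W-\Delta F)}$, multiply both sides by $e^{\beta W}$, and integrate over $W$ from $-\infty$ to $W^\varepsilon$. The right-hand side then collapses to $e^{\beta\Delta F}\int_{-\infty}^{W^\varepsilon} dW\, P_\fwd(W)$, which equals $(1-\varepsilon)\, e^{\beta\Delta F}$ by Definition~\ref{def:WEps}. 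This already reproduces the desired RHS of Eq.~(\ref{eq:WEpsCalcnCrooks}).

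The remaining task is to identify the left-hand integral, $\int_{-\infty}^{W^\varepsilon} dW\, P_\rev(-W)\, e^{\beta W}$, with $\chi^\varepsilon_\rev(\beta)$ as defined in Eq.~(\ref{eq:ChiEpsRev}). The definition is phrased in terms of $P_\rev(W)$ integrated from $-W^\varepsilon$ to $\infty$, so the match is not immediate. I would close the gap with the substitution $W \mapsto -W$, under which $\int_{-W^\varepsilon}^\infty dW\, P_\rev(W)\, e^{-\beta W} = \int_{-\infty}^{W^\varepsilon} dW'\, P_\rev(-W')\, e^{\beta W'}$. The reversal of the integration limits together with the sign flip in the exponent then reproduce exactly the left-hand integral obtained above, and assembling the two sides yields Eq.~(\ref{eq:WEpsCalcnCrooks}).

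The one genuinely non-routine step — and the main place to be careful — is precisely this change of variables. Unlike Lemma~\ref{lemma:wEpsCalcn}, where Crooks' Theorem couples $P_\fwd(W)$ directly to $P_\rev(-W)$ on the matching right tail and Definition~\ref{def:wDelta} is already stated in terms of $P_\rev(-W)$, here the generalized characteristic function $\chi^\varepsilon_\rev$ is written with the unflipped argument $P_\rev(W)$. I must therefore track both the orientation of the integration domain and the transformation $e^{-\beta W}\mapsto e^{\beta W}$ simultaneously, so that the $\varepsilon$-tail of $P_\fwd$ that defines $W^\varepsilon$ lines up with the correct tail of $P_\rev$. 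Everything else is an immediate substitution, so once this identification is made the result follows directly.
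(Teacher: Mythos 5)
Your proposal is correct and follows exactly the route the paper intends: the paper omits the proof with the remark that it proceeds ``in the same way'' as Lemma~\ref{lemma:wEpsCalcn}, i.e., by rearranging Crooks' Theorem, integrating over $W \in (-\infty, W^\varepsilon]$ so that Definition~\ref{def:WEps} supplies the factor $1-\varepsilon$, and identifying the remaining integral with $\chi^\varepsilon_\rev(\beta)$ via the substitution $W \mapsto -W$. Your explicit handling of that change of variables (flipping the limits and the sign in the exponent) is the only non-trivial bookkeeping, and you carry it out correctly.
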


These one-shot generalizations extend Jarzynski's Equality [Eq.~\eref{eq:ChiFwd}], rendering it more robust against unlikely (probability less than $\epsilon$) but highly expensive (work cost more than $W^\varepsilon$) fluctuations in work. 
The generalizations characterize every quantum or classical process that produces a work distribution governed by Crooks' theorem.
An alternative proof of these general lemmata---specialised for quantum systems undergoing unitary evolution and hence producing discrete work distributions---is presented in Appendix~\ref{app:quantJE}.

\subsection{Bounding one-shot work quantities}
We can use Crooks' Theorem, via the above lemmata, to derive bounds on the one-shot required and extractable work. 
These bounds depend on characteristics of the work distributions:

\begin{theorem}
\label{theorem:wDeltaBound}
The work $\delta$-extractable from each reverse trial satisfies
\begin{equation}   \label{eq:wEpsBound}
   w^{\delta}   \leq
   \Delta F   -   \frac{1}{\beta}   
          [   {H^\beta_\infty} ( P_\fwd )   +   \log ( 1 - {\delta} )   ]
\end{equation}
for $\delta \in [0, 1)$, wherein we have defined
\begin{equation}   \label{eq:CtsHDefn}
   {H^\beta_\infty} ( P )  
    \equiv   - \log ( P^{\rm max}  /  \beta )
\end{equation}
for continuous work distributions.

\begin{proof}
Let $P^{\rm max}_\fwd$ denote the greatest value of $P_\fwd(W)$:
$P^{\rm max}_\fwd   \geq   P_\fwd(W)   \; \: \forall \: W$.
We can upper-bound the integral implicit in the $\chi^\delta_{\rm fwd}( \beta )$ of Eq.~(\ref{eq:wEpsCalcn}) in Lemma~\ref{lemma:wEpsCalcn}:
\begin{eqnarray}
   (1 - \delta )   e^{ - \beta   \Delta F }
  &   =   \chi^\delta_{\rm fwd}( \beta ) 
        \equiv   \int_{ w^\delta }^\infty   dW
                     P_\fwd (W)   e^{ - \beta W}  \nonumber \\
  &   \leq    P_\fwd^{\rm max}   \int_{ w^\delta }^\infty   dW   
                  e^{ - \beta W}  \nonumber \\
  &    =   e^{ \log \left( { P_\fwd^{\rm max} }  / {\beta}   \right) } 
              e^{ - \beta w^\delta } \nonumber \\
  & =   e^{ - {H^\beta_\infty} ( P_{\rm fwd} )     - \beta w^\delta }.
\end{eqnarray}
Solving for $w^\delta$ yields Ineq. (\ref{eq:wEpsBound}).
\end{proof}
\end{theorem}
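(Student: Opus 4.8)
The plan is to build directly on the one-shot Jarzynski equality of Lemma~\ref{lemma:wEpsCalcn}, which fixes the generalized characteristic function exactly as $\chi^\delta_\fwd(\beta) = (1-\delta)\,e^{-\beta\Delta F}$. Since this already isolates $\Delta F$ and the failure probability $\delta$ on one side, the only work left is to relate the left-hand integral in Eq.~(\ref{eq:ChiDDef}) to the peak of the forward distribution. My strategy is to over-estimate that integral by pulling the density out at its maximal value, which converts the unknown shape of $P_\fwd$ into the single number $P^{\rm max}_\fwd$ appearing in the min-entropy of Eq.~(\ref{eq:CtsHDefn}).

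Concretely, first I would introduce $P^{\rm max}_\fwd \equiv \sup_W P_\fwd(W)$ and use $P_\fwd(W)\le P^{\rm max}_\fwd$ inside $\chi^\delta_\fwd(\beta)=\int_{w^\delta}^\infty dW\,P_\fwd(W)\,e^{-\beta W}$, so that $\chi^\delta_\fwd(\beta)\le P^{\rm max}_\fwd\int_{w^\delta}^\infty dW\,e^{-\beta W}$. The remaining integral is elementary and evaluates to $e^{-\beta w^\delta}/\beta$, leaving $\chi^\delta_\fwd(\beta)\le (P^{\rm max}_\fwd/\beta)\,e^{-\beta w^\delta}$. Recognizing $P^{\rm max}_\fwd/\beta = e^{-H^\beta_\infty(P_\fwd)}$ from the definition in Eq.~(\ref{eq:CtsHDefn}), and substituting the exact value from the Lemma, I would obtain $(1-\delta)\,e^{-\beta\Delta F}\le e^{-H^\beta_\infty(P_\fwd)-\beta w^\delta}$. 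Taking logarithms of both sides, rearranging, and dividing by $\beta>0$ then yields exactly Ineq.~(\ref{eq:wEpsBound}).

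The delicate points are few but worth flagging. The argument presumes $P^{\rm max}_\fwd$ exists and is finite, i.e.\ that the continuous work density is bounded; this is where the restriction to well-behaved distributions enters, and it is also why the result is only an upper bound on $w^\delta$ rather than an equality---replacing the density by its supremum necessarily inflates the tail integral. I would also note that the direction of the inequality is preserved only because $\beta>0$, and that the restriction $\delta\in[0,1)$ is exactly what keeps $\log(1-\delta)$ finite, so the bound degrades gracefully as $\delta\to 1$. The main conceptual obstacle is not the estimate itself but recognizing that Lemma~\ref{lemma:wEpsCalcn} has already done the heavy lifting by transporting information about the reverse-process quantity $w^\delta$ into a statement about the forward density $P_\fwd$; once that bridge is in place, the bound follows from a single crude but tight-at-the-peak majorization.
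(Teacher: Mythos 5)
Your proposal is correct and follows essentially the same route as the paper's own proof: invoke Lemma~\ref{lemma:wEpsCalcn} for the exact value of $\chi^\delta_\fwd(\beta)$, majorize the density by $P^{\rm max}_\fwd$ inside the tail integral, evaluate the resulting exponential integral to get $(P^{\rm max}_\fwd/\beta)e^{-\beta w^\delta}$, and solve for $w^\delta$. Your added remarks on the boundedness of the density and the roles of $\beta>0$ and $\delta\in[0,1)$ are sensible clarifications but do not constitute a different argument.
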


Analogous statements, which we present without further proof, describe $W^\varepsilon$:
\begin{theorem} 
\label{theorem:WepsBound}
The work $\varepsilon$-required during each forward trial is bounded by
\begin{equation} \label{eq:CrooksWLemma}
   W^\varepsilon  \geq
   \Delta F   +   \frac{1}{\beta}
   [ {H^\beta_\infty}( P_\rev )   +   \log( 1 - \varepsilon  ) ]
\end{equation}
for failure probability $\varepsilon \in [0, 1)$. 
\end{theorem}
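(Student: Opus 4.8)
The plan is to mirror the proof of Theorem~\ref{theorem:wDeltaBound} essentially line for line, substituting the reverse characteristic function of Lemma~\ref{lemma:WEpsCalcnCrooks} for the forward one used there. Whereas Theorem~\ref{theorem:wDeltaBound} bounds an integral of $P_\fwd$ over $[w^\delta, \infty)$, here I would start from Eq.~(\ref{eq:WEpsCalcnCrooks}), which equates $(1-\varepsilon)\,e^{\beta \Delta F}$ with the generalized characteristic function $\chi^\varepsilon_\rev(\beta)$, i.e.\ the integral of $P_\rev(W)\,e^{-\beta W}$ over $[-W^\varepsilon, \infty)$ defined in Eq.~(\ref{eq:ChiEpsRev}).

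First I would let $P^{\rm max}_\rev$ denote the supremum of $P_\rev(W)$, so that $P_\rev(W) \leq P^{\rm max}_\rev$ everywhere on the integration region. Pulling this constant out of the integral leaves only the elementary exponential integral $\int_{-W^\varepsilon}^\infty dW\, e^{-\beta W} = \frac{1}{\beta}\, e^{\beta W^\varepsilon}$, which converges because $\beta > 0$ and the lower limit $-W^\varepsilon$ is finite. The chain of (in)equalities would then read
\begin{equation}
(1 - \varepsilon)\, e^{\beta \Delta F}
= \chi^\varepsilon_\rev(\beta)
\leq P^{\rm max}_\rev \int_{-W^\varepsilon}^\infty dW\, e^{-\beta W}
= e^{\log(P^{\rm max}_\rev / \beta)}\, e^{\beta W^\varepsilon}
= e^{-H^\beta_\infty(P_\rev) + \beta W^\varepsilon},
\end{equation}
where the final equality applies the definition (\ref{eq:CtsHDefn}) of the continuous min-entropy to $P_\rev$. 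Taking logarithms and solving for $W^\varepsilon$ then isolates the claimed lower bound (\ref{eq:CrooksWLemma}).

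The only place demanding care --- and the one respect in which this is not a verbatim transcription of the previous proof --- is the bookkeeping of signs. Because the reverse distribution is integrated over its right tail from the finite lower limit $-W^\varepsilon$ (rather than from $w^\delta$ as in the forward case), the exponential integral produces $e^{+\beta W^\varepsilon}$ rather than $e^{-\beta w^\delta}$. This positive exponent, together with the $+\beta \Delta F$ appearing on the left, is exactly what converts the final result from an upper bound (as obtained on $w^\delta$) into the lower bound asserted for $W^\varepsilon$. One must simply confirm that the inequality direction is preserved under taking logarithms (monotonic) and dividing by $\beta > 0$. Everything else is the same routine manipulation, which is presumably why the authors state the theorem without reproducing a separate proof.
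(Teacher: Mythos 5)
Your proposal is correct and is precisely the argument the paper intends: the authors state Theorem~\ref{theorem:WepsBound} "without further proof" as the analogue of Theorem~\ref{theorem:wDeltaBound}, and your derivation — bounding $\chi^\varepsilon_\rev(\beta)$ from Lemma~\ref{lemma:WEpsCalcnCrooks} by $P^{\rm max}_\rev$ times the exponential integral $\frac{1}{\beta}e^{\beta W^\varepsilon}$ and solving — is exactly that analogue, with the sign bookkeeping handled correctly.
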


These theorems demonstrate our central claim: that one-shot statistical mechanics can be applied to settings governed by fluctuation theorems.
We have related the one-shot work quantities $W^{\epsilon}$ and $w^{\delta}$ to the one-shot entropy ${H^\beta_\infty}$.

Operationally when one handles data arising from a simulation or experiment, one does not directly observe a work distribution.
Rather, one obtains a list of values that could be divided into bins of finite range (i.e.,\ presented as a histogram) to approximate the true probability distribution.
As want to consider a theoretical bound that reflects the behavior of the underlying thermal process independently of the choice of binning, we consider the entropy expressed in terms of the probability density $P(W)$.
For Theorems \ref{theorem:wDeltaBound} and \ref{theorem:WepsBound} to be applicable, it must be the case that the experimental data can be fitted to a theoretical model- a weak assumption for any scientific experiment.

There are further considerations that must be taken into account when considering the entropy of a continuous distribution in contrast to the entropy of a histogram taken from that distribution.
(Throughout the following paragraph, we refer to $P_\fwd$ and $P_\rev$ jointly as $P$.) 
For a histogram taken of $P(W)$,
in the limit of small enough bin width $dW$, the probability of each bin is well approximated by the product $P(W) dW$ as evaluated at a point $W$ inside that bin.
However, in the limit of decreasing bin size, the probability of being in any particular bin will tend to zero, and the order-$\infty$ entropy, as previously defined, diverges:
$H_\infty(P) 
= \lim_{dW\to0} 
\left[ -\log\left(P^{\rm max} dW\right) \right] \to \infty$.
An unmodified $H_\infty$ is not a useful quantity in this limit.
By this measure, there is an infinite amount of information in {\em any} continuous distribution, and hence it can not be used to quantify the amount by which some continuous distributions are more entropic than others.
To circumvent this problem, we employ a type of renormalization. $H_\infty$ can be split into a finite part that varies with the distribution under consideration, and an infinite part that does not:
$H_\infty = -\log\left(P^{\rm max} dW\right) = -\log\left(P^{\rm max} k^{-1}\right) - \log\left(k dW\right)$, where $k$ is an arbitrary factor with units of inverse energy chosen such that the argument of each logarithm is dimensionless.
When one considers the {\em difference} in entropy between distributions, the latter term always cancels out.
As such the quantity $H^k_\infty= -\log\left(P^{\rm max} k^{-1}\right) $, by omitting the latter term, defines the amount by which the continuous order-$\infty$ entropy differs from a reference distribution of uniform probability density $k$ over width $k^{-1}$.
Our choice here of
\begin{equation}   \label{eq:HMinDefn}
  {H^\beta_\infty} ( P )  
    \equiv   - \log ( P^{\rm max}  /  \beta )
\end{equation}
 amounts to comparing the entropy of $P(W)$ with that of a uniform distribution over range $\beta^{-1}$ with probability density $\beta$.
We remark this technique is not unique to the order-$\infty$ R\'enyi entropy, but is also necessary if one wishes to arrive at the differential entropy as a limiting case of the discrete Shannon entropy\footnote{For the standard formulation $H(X) = -\int dX P(X) \log P(X)$, dimensions have been ignored, and the implicit reference is a uniform distribution with width $1$ and probability density $1$. The unnormalized expression is  $\lim_{dX\to0} -\sum_X dX P(X) \log \left[ P(X) dX \right]$, which diverges due to the $dX$ in the logarithm.}.

Although any quantity with units of inverse energy could be used as $k$, $\beta$ is a natural choice: 
$-\log \beta $ appears everywhere $\log P^{\rm max} $ appears in our calculations. 
Any alternative choice of normalization $k$ would result in the need for an additional correction term of $\log\left(k/\beta\right)$ in equations~\ref{eq:wEpsBound} and \ref{eq:CrooksWLemma}.
(This extra term can be interpreted as how the entropy of the new reference distribution compares to that of the uniform distribution of range $\beta^{-1}$ and probability $\beta$.)

The bounds in Theorems~\ref{theorem:wDeltaBound}~and~\ref{theorem:WepsBound} shed light on the physical contributions to $W^\varepsilon$. 
To a first approximation, the $\varepsilon$-required work equals $\Delta F$, the work needed to complete the process quasistatically. 
The negative contribution from $\log( 1 - \varepsilon )$ accounts for the tradeoff between work and failure probability: The agent can lower the bound on the required work by accepting a higher failure probability $\varepsilon$.

Outside the quasistatic limit, the system leaves equilibrium, and $W$ fluctuates from trial to trial.
This fluctuation necessitates a protocol-specific correction $H^\beta_\infty( P_\rev )$. 
In cryptography applications, $H_\infty(P)$ quantifies the uniform randomness (and hence resources usable to ensure privacy) extractable from a distribution $P$~\cite{RennerW04}. 
A distribution might have more randomness, but $H_\infty$ quantifies the minimum value (being the lowest-valued R\'enyi entropy).
In our result, ${H^\beta_\infty}(P_\rev)$ can be thought of as the uniform randomness intrinsic to the work distribution.
${H^\beta_\infty}(P_\rev)$ thus quantifies fluctuations in work, caused by irreversibility, that raise the lower bound on $W^\varepsilon$.

Whereas some one-shot results~(e.g.~\cite{BrandaoHNOW15,EgloffDRV12}) involve R\'{e}nyi entropies of states, ${H^\beta_\infty}(P_\fwd )$ is an entropy of a {\em work distribution}. 
${H^\beta_\infty}(P_\fwd)$ captures fluctuation information from all sources that might affect the work distribution. 
These sources include the initial state and the manner in which the protocol is executed (e.g., quickly or quasistatically).
In contrast, an entropy evaluated on states encodes only some of this information. 
By containing an entropy of a work distribution, rather than an entropy of a state, the above results can be applied in a more general setting: they can be related to the output of any procedure, as opposed to the worth of a particular input state under a fixed procedure (usually taken to be the optimal one~\cite{EgloffDRV12}).
Consequently, our results remain independent of the work-extraction model used. 
Theorems~\ref{theorem:wDeltaBound}~and~\ref{theorem:WepsBound} govern (semi)classical and quantum systems, so long the protocol produces a work distribution consistent with Crooks' Theorem.

As theoretical limits, these bounds will always hold true. 
However, to be operationally useful for bounding $w^\delta$ (or $W^\varepsilon$) they must be applied to models where $P^{\rm max}$ can be upper-bounded following enough trials of the experiment. 
This is possible if the distribution $P(W)$ is smooth such that beyond a certain narrowness of bin width, any further division of each bin results in approximately equal probability densities.

Applying information about the protocol executed in one direction, we have used Crooks' Theorem to infer about the opposite direction. 
This information tightens the bound on $W^\varepsilon$ when ${H^\beta_\infty}(P_\rev )\   \geq\ 0$, \footnote{
One-shot entropies of continuous probability density functions are known to assume negative values~\cite{HanelTT09} when they are less entropic than the implicit reference distribution.
}
i.e., when
\begin{equation}   \label{eq:GoodBound}
   P^{\rm max}  <  \beta.
\end{equation} 
Systems described poorly by conventional statistical mechanics tend to satisfy Ineq.~(\ref{eq:GoodBound}). 
Such systems' work distributions have significant spreads relative to the characteristic energy scale $\beta^{-1}$, such that the distribution lacks tall peaks. 
We will present DNA-hairpin experiments as an example.

\subsection{Crooks' Theorem in specific one-shot work-extraction models}
\subsubsection{Tightening a bound in the work-extraction game}
~\\
Egloff \emph{et al.}\ calculate the optimal amount $w^{\delta}_{\rm best}$ of work 
$\delta$-extractable from a state via the most efficient strategy~\cite{EgloffDRV12}. 
Their calculation implies an upper bound on the work extractable via arbitrary strategies.
By applying Theorems~\ref{theorem:wDeltaBound} and~\ref{theorem:WepsBound} to the Egloff \emph{et al.} framework, we can tighten the bound for protocols that satisfy  the assumptions used to derive Crooks' Theorem and for which $P^{\rm max} < \beta$.

In the Egloff \emph{et al.} setting, we can consider a forward protocol that consists of two stages:
First, the thermal state $\gamma_{- \tau} \equiv e^{-\beta H_{-\tau} } / Z_{-\tau}$
transforms into some nonequilibrium state $\sigma$ as the externally driven Hamiltonian changes. 
The system either can remain thermally isolated or can thermalize, provided that the thermalization satisfies detailed balance (see Appendix~\ref{section:ThermModels}). 
The agent can choose one of many possible strategies, e.g., by alternating Hamiltonian changes and thermalizations or by thermally isolating the system throughout the first stage.
Second, $\sigma$ thermalizes to 
$\gamma_\tau \equiv e^{-\beta H_{\tau} } / Z_{\tau}.$ 
This thermalization neither costs nor produces work. 
The entire protocol is encapsulated in
$
   ( \gamma_{ - \tau },   H_{ -\tau } )
   \mapsto
   ( \sigma, H_\tau  )
   \mapsto 
   ( \gamma_\tau,   H_\tau  ).
$
At the start of the reverse protocol, the system begins in the themal state of $H_{\tau}$.
Under the time-reversed process (in which the drive is reversed, such that the Hamiltonian retraces its path through configuration space), the system is transformed into some nonequilibrium state $\sigma'$. Then the state thermalizes to thermal state of $H_{\tau}$.

For protocols which fall into the above category, knowing about one protocol, we can bound the work extractable from, or the work cost of, the opposite protocol:
\begin{corollary}
\label{theorem:CrooksGamewEps2}
The work $\delta$-extractable from each implementation of the reverse protocol, in terms of the forward protocol's ${H^\beta_\infty} (P_\fwd)$, satisfies
\begin{equation}   
   w^{\delta}   
   \leq \frac{1}{ \beta }   \left[
   \log M   \left(   
       \frac{  G^{T} ( \gamma_{\tau} ) }{  1 - \delta  }  \:  
       ||  \:  G^{T} (  \gamma_{-\tau} )  
       \right)
       -   {H^\beta_\infty} (P_{\fwd } )
   \right].
\end{equation}
for $\delta \in [0, 1)$.
\end{corollary}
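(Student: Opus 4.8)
The plan is to combine the distribution-level bound of Theorem~\ref{theorem:wDeltaBound} with the Egloff \emph{et al.} formula~\eref{eq:EgloffDRV12Thm}, specialised to the endpoints of the reverse protocol. Theorem~\ref{theorem:wDeltaBound} already supplies
\begin{equation}
   w^{\delta} \leq \Delta F - \frac{1}{\beta}\left[ {H^\beta_\infty}(P_\fwd) + \log(1-\delta) \right],
\end{equation}
so the entire task reduces to showing that the protocol-independent piece $\Delta F - \frac{1}{\beta}\log(1-\delta)$ coincides with the relative-mixedness term $\frac{1}{\beta}\log M\!\left( G^{T}(\gamma_\tau)/(1-\delta) \,||\, G^{T}(\gamma_{-\tau}) \right)$ appearing in the corollary. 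Equivalently, I would identify this term with the Egloff \emph{et al.} optimal $\delta$-extractable work $w^{\delta}_{\rm best}$ for the thermal-to-thermal transformation $(\gamma_\tau, H_\tau) \mapsto (\gamma_{-\tau}, H_{-\tau})$ underlying the reverse protocol, and then substitute.

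First I would evaluate the relative mixedness directly. Gibbs-rescaling recasts a state's work capacity as purely informational, so it maps each Gibbs state $\gamma = e^{-\beta H}/Z$ to a \emph{uniform} distribution: the rescaled image of an equilibrium state carries no information-sourced work. The two uniform images $G^{T}(\gamma_\tau)$ and $G^{T}(\gamma_{-\tau})$ then differ only in total rescaled volume, which is set by the partition functions $Z_\tau$ and $Z_{-\tau}$. A short computation of the relative mixedness of two uniform distributions therefore returns the partition-function ratio $Z_{-\tau}/Z_\tau$; since $\Delta F = F(\gamma_\tau) - F(\gamma_{-\tau}) = \frac{1}{\beta}\log(Z_{-\tau}/Z_\tau)$, this gives
\begin{equation}
   \frac{1}{\beta}\log M\!\left( G^{T}(\gamma_\tau) \,||\, G^{T}(\gamma_{-\tau}) \right) = \Delta F,
\end{equation}
which is precisely the statement that quasistatically reversing a thermal-to-thermal transformation extracts the free-energy difference.

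To incorporate $\delta$, I would use that $M$ is homogeneous of degree one (linear) in its first argument, so rescaling $G^{T}(\gamma_\tau)$ by $1/(1-\delta)$ rescales $M$ by the same factor:
\begin{equation}
   \frac{1}{\beta}\log M\!\left( \frac{G^{T}(\gamma_\tau)}{1-\delta} \,||\, G^{T}(\gamma_{-\tau}) \right) = \Delta F - \frac{1}{\beta}\log(1-\delta).
\end{equation}
Substituting this identity into the Theorem~\ref{theorem:wDeltaBound} bound replaces $\Delta F - \frac{1}{\beta}\log(1-\delta)$ by the relative-mixedness term and yields the claimed inequality. The main obstacle is the middle step: one must pin down the precise action of Gibbs-rescaling $G^{T}$ on thermal states and the exact definition of the relative mixedness $M$ used by Egloff \emph{et al.}, so as to justify both the partition-function evaluation and the homogeneity property, and to track how the factor $1/(1-\delta)$ threads through $M$. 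Reconciling conventions—the semiclassical commuting assumption, continuous versus discrete support, and the orientation of the inequality defining $M$—is where the care lies; once those are fixed, the remainder is bookkeeping that reuses Theorem~\ref{theorem:wDeltaBound} verbatim.
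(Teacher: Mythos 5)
Your proposal is correct and follows essentially the same route as the paper: the paper's proof likewise reduces the corollary to showing that $\frac{1}{\beta}\log M\bigl(G^{T}(\gamma_\tau)/(1-\delta)\,||\,G^{T}(\gamma_{-\tau})\bigr) = \Delta F - \frac{1}{\beta}\log(1-\delta)$, by noting that Gibbs-rescaled thermal states have straight-line Lorenz curves ending at $(Z_{\pm\tau},1)$ so that $M$ reduces to the partition-function ratio with the $1/(1-\delta)$ factor threading through multiplicatively, and then substitutes into Theorem~\ref{theorem:wDeltaBound}. The only difference is cosmetic: the paper phrases the evaluation of $M$ via horizontal compression of Lorenz curves rather than your homogeneity argument, and it additionally records the (logically inessential for this corollary) chain $w^\delta \leq w^\delta_{\rm best}$ from the Egloff \emph{et al.}\ framework.
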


\begin{corollary}
\label{theorem:CrooksGameWeps}
The work $\varepsilon$-required during each implementation of the forward protocol, in terms of the reverse protocol's $ {H^\beta_\infty (P_\rev)}$, satisfies 
\begin{equation}  \label{eq:CrooksGameWeps0}
   W^\varepsilon  
   \geq\    \frac{1}{ \beta }   \left[
   \log M   \left(   
       \frac{  G^{T} (  \gamma_{-\tau}  ) }{  1 - \varepsilon  }  \:  
       ||  \:  G^{T} (  \gamma_\tau  )  
       \right)
       +    {H^\beta_\infty}  (P_{\rev } )
   \right]
\end{equation}
for $\varepsilon \in [0, 1)$.
\end{corollary}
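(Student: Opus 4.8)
The plan is to obtain this corollary as the exact mirror of Corollary~\ref{theorem:CrooksGamewEps2}, trading the upper bound on the reverse $\delta$-extractable work for a lower bound on the forward $\varepsilon$-required work. I would start not from Crooks' Theorem directly but from the bound already established in Theorem~\ref{theorem:WepsBound}, namely $W^\varepsilon \geq \Delta F + \frac{1}{\beta}[H^\beta_\infty(P_\rev) + \log(1-\varepsilon)]$, since the one-shot entropy term $\frac{1}{\beta}H^\beta_\infty(P_\rev)$ already appears there verbatim and need only be carried along unchanged. All the real work then lies in re-expressing the remaining piece $\Delta F + \frac{1}{\beta}\log(1-\varepsilon)$ as the relative-mixedness quantity $\frac{1}{\beta}\log M(G^T(\gamma_{-\tau})/(1-\varepsilon)\,||\,G^T(\gamma_\tau))$ that appears in the statement.

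The key steps are two identities for the relative mixedness $M$. First, I would invoke the Egloff \emph{et al.} formula, Eq.~(\ref{eq:EgloffDRV12Thm}), applied to the forward thermal-to-thermal transition $(\gamma_{-\tau},H_{-\tau}) \mapsto (\gamma_\tau,H_\tau)$: because both endpoints are Gibbs states of their respective Hamiltonians, the optimal single-shot work collapses to the free-energy difference, giving an identity of the form $\frac{1}{\beta}\log M(G^T(\gamma_{-\tau})\,||\,G^T(\gamma_\tau)) = \pm\Delta F$, with the sign fixed by the forward orientation and the convention $\Delta F = F(\gamma_\tau) - F(\gamma_{-\tau})$. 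Second, I would use the homogeneity of $M$ in its first argument — the Gibbs-rescaled min-relative quantity scales as a pure maximum of ratios, so that $\log M(A/(1-\varepsilon)\,||\,B) = \log M(A\,||\,B) - \log(1-\varepsilon)$ — to absorb the $\frac{1}{\beta}\log(1-\varepsilon)$ correction carried over from Theorem~\ref{theorem:WepsBound} precisely into the $1/(1-\varepsilon)$ rescaling of the first argument. Substituting both identities into the Theorem~\ref{theorem:WepsBound} bound merges $\Delta F + \frac{1}{\beta}\log(1-\varepsilon)$ into the single $\log M$ term and yields the stated inequality for all $\varepsilon \in [0,1)$.

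The main obstacle is the sign-and-direction bookkeeping in the first identity. Unlike Corollary~\ref{theorem:CrooksGamewEps2}, which concerns \emph{extractable} work of the reverse process, here I am lower-bounding \emph{required} work of the forward process, so I must be sure that the Egloff \emph{et al.} formula — phrased for extractable work — is applied to the correct transition and that its sign is reconciled with the $+\Delta F$ appearing in Theorem~\ref{theorem:WepsBound}. The clean reconciliation uses microscopic reversibility: the same Crooks relation that produced Theorem~\ref{theorem:WepsBound} also relates the forward and reverse Gibbs-rescaled descriptions, so that the forward $M(G^T(\gamma_{-\tau})\,||\,G^T(\gamma_\tau))$ and the reverse $M(G^T(\gamma_\tau)\,||\,G^T(\gamma_{-\tau}))$ of Corollary~\ref{theorem:CrooksGamewEps2} are consistent images of the single scale $\Delta F$. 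Once the orientation is pinned down, the remaining algebra is the routine substitution above; the only further subtlety is confirming that the $1/(1-\varepsilon)$-homogeneity of $M$ carries over to the continuous, Gibbs-rescaled work distributions used here, rather than holding only for the discrete distributions of the original work-extraction game.
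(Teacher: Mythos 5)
Your route is the paper's route: the paper proves the mirror statement (Corollary~\ref{theorem:CrooksGamewEps2}) by evaluating $\frac{1}{\beta}\log M$ for the Gibbs-to-Gibbs transformation via straight-line Lorenz curves---which yields precisely your two identities, namely $\frac{1}{\beta}\log M\bigl(G^T(\gamma_A)\,||\,G^T(\gamma_B)\bigr)=\frac{1}{\beta}\log(Z_B/Z_A)$ together with the $-\frac{1}{\beta}\log(1-\varepsilon)$ shift from rescaling the first argument---and then substitutes the result into the bound of Theorem~\ref{theorem:wDeltaBound}, remarking that the work-cost case is derived ``similarly,'' which is exactly your substitution into Theorem~\ref{theorem:WepsBound}. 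Your sign-and-orientation worry is the one genuine subtlety, and it cuts against the printed statement rather than against your argument: applying the paper's own Lorenz-curve recipe to $M\bigl(G^T(\gamma_{-\tau})/(1-\varepsilon)\,||\,G^T(\gamma_\tau)\bigr)$ gives $\frac{1}{\beta}\log M=-\Delta F-\frac{1}{\beta}\log(1-\varepsilon)$, so reconciling Eq.~\eref{eq:CrooksGameWeps0} with Theorem~\ref{theorem:WepsBound} requires reading that $M$ with its arguments in the opposite orientation (equivalently, as the reciprocal of the quantity the appendix's convention would produce); once that orientation is pinned down as you propose, your derivation goes through verbatim.
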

The proofs appear in Appendix~\ref{app:gametight}.
Each corollary consists of a bound derived from~\cite{EgloffDRV12} and an ${H^\beta_\infty}$ correction attributable to Crooks' Theorem (introduced via Theorems~\ref{theorem:wDeltaBound} and~\ref{theorem:WepsBound}). 
The ${H^\beta_\infty}$ quantifies the protocol's suboptimality, caused by dissipation due to the protocol's speed~\cite{BrowneGDV14}.
Positive values of ${H^\beta_\infty}$ tighten the bounds. 
Hence incorporating information about the forward (reverse) process into the reverse (forward) bound improves the bound when the process deviates sufficiently from the quasistatic ideal.
%

\subsubsection{Modeling fluctuation-relation problems with resource theories}
~\\
\indent One can formulate scenarios governed by Crooks' Theorem in thermodynamic resource theories. Such scenarios involve a sequence of {\em thermal operations} that obey detailed balance. 
Such operations form a strict subset of the set of all thermal operations (see Appendix~\ref{section:ThermModels}). Hence Crooks' Theorem does not govern all thermal operations.
The application of Crooks' Theorem requires the introduction of work and time into the resource theories. Work can be defined in terms of a battery~\cite{SkrzypczykSP14,YungerHalpernR14}; and time, in terms of a clock~\cite{BrandaoHORS13,HorodeckiO13}.
Resource-theory results can be used to derive the work cost of a sequence of transformations that a system governed by Crooks' Theorem can follow (see Appendix~\ref{app:BattAndClock}).

We leave for future work the derivation, from resource-theory results, of testable predictions about Crooks' problem. Considerable mathematical tools, such as monotones~\cite{HorodeckiO13,BrandaoHNOW15,GourMNSYH13} and catalysts~\cite{BrandaoHNOW15,GourMNSYH13}, have been developed within the resource-theory framework.
Having demonstrated the applicability of Crooks' Theorem to resource theories, we look to use Crooks' Theorem to bridge these mathematical tools to experiments.

\section{Examples of one-shot work quantities in fluctuation contexts}

\subsection{Landauer bit reset and Szil\'{a}rd work extraction}

\begin{figure}[htb]
\centering
\includegraphics[width=.65\textwidth, clip=true]{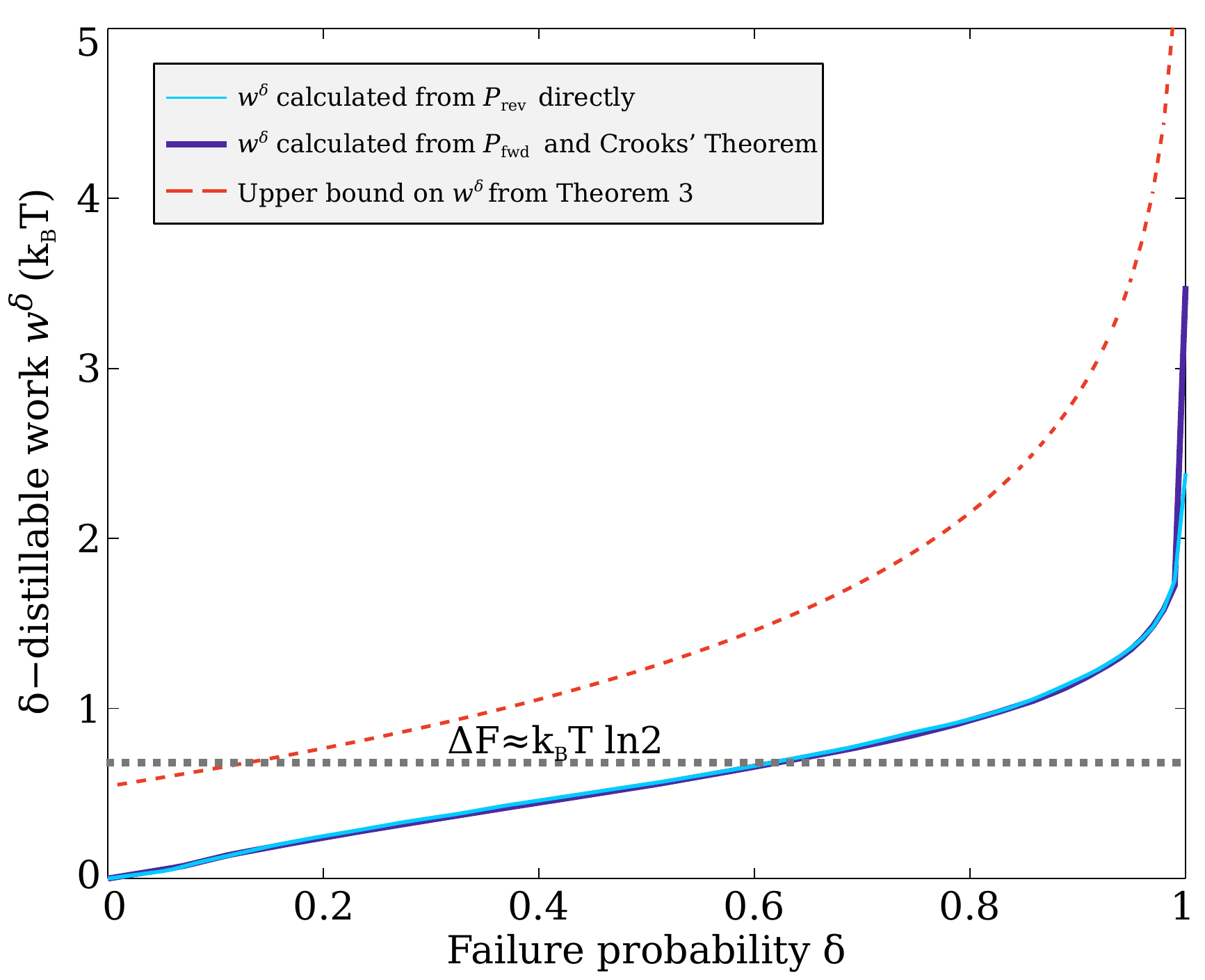}
\caption{
\caphead{One-shot work vs.\ failure probability for numerical simulations of Landauer erasure and Szil\'{a}rd work extraction.}
One-shot work quantities were calculated in a setting governed by fluctuation theorems.
We simulated 10,000 bit-reset trials.
The predicted outputs for work-extraction trials (dark blue), inferred via Crooks' Theorem,
are compared to the work outputs from 10,000 directly simulated work-extraction trials (light blue)
and to the one-shot generalizations of Jarzynski's Equality presented in this article (red dashed).
The horizontal dotted gray line indicates the free-energy difference of the process ($\kB T \ln 2$).
The dark blue curves' coinciding with the light blue curves supports the applicability of Crooks' Theorem to this idealized setting.
The red dashed curves bound the blue curves from above, showing that the one-shot generalizations of Jarzynski's Equality (Theorem~\ref{theorem:wDeltaBound}) governs this scenario.
\label{fig:Simulation}
}
\end{figure}

A simple example involves the heat-exchanging portion of {\em Landauer bit reset} and its reverse, {\em Szil\'{a}rd work extraction}.
The set-up consists of a two-level system $\mathcal{S}$ governed by the Hamiltonian $H( \lambda_t ) = E(t) \ketbra{E}{E} $. 
Suppose $\mathcal{S}$ exchanges energy with a heat bath whose temperature is $T = \frac{1 }{ \kB \beta}$. 
At time $t = -\tau$, $E(t)  =  0$,  and $\mathcal{S}$ is in thermal equilibrium, i.e., in the maximally  mixed state $\rho(-\tau)  =  \frac{1}{2} ( \ketbra{0}{0}  +  \ketbra{E}{E} )$. 
If $\rho$ represents the location of a particle in a two-compartment box, the agent has no idea which compartment the particle occupies. 

Transforming $\rho(-\tau)$ into a pure state---forcing the particle into one half of the box---is called \emph{bit reset}, or \emph{Landauer erasure}. Resetting the bit quasistatically costs, on average,
\begin{equation} \label{eq:LandauerCost}
   \langle W \rangle  =   \int_0^\infty   \frac{   e^{- \beta E }   }{Z}   dE
   = \kB T \log 2.
\end{equation} 
If the bit is reset in a finite time, $\langle W \rangle$ might exceed $\kB T \log 2$~\cite{BrowneGDV14}. 
Such a protocol has appeared in fluctuation contexts before~\cite{EgloffDRV12,BrowneGDV14,Aberg13} and has been realized experimentally (e.g., in a test of Jarzynski's Equality by Brownian motion~\cite{BerutPC13,JunGB14}).

We define failure under the assumption that every started trial is completed: 
a forward trial \emph{fails} if it consumes more work than the budgeted work $W^\varepsilon$.
(Alternatively, one could define ``failure'' under the assumption that too-costly trials would not be completed. 
A trial would be said to fail if the budgeted work were consumed but the bit had not been reset.)

Reversal of the bit reset amounts to \emph{Szil\'{a}rd work extraction}. 
Le\'{o} Szil\'{a}rd envisioned the conversion of information into work in 1929~\cite{Szilard29}. 
$\mathcal{S}$ begins thermally isolated, in the pure state $\ket{0}$, and governed by $H( \lambda_t ) = 0$. 
During the first leg of Szil\'{a}rd work extraction, the agent raises $E(t)$ to infinity. 
The raising costs no work because $\mathcal{S}$ occupies the lower level. 
In the second leg, $\mathcal{S}$ is coupled to the bath, then performs positive work as $E(t)$ decreases to zero.
To be strictly the reverse of the bit reset presented above, we consider only the second leg as an implementation of the reverse protocol.
The initial state, although it is a pure state, is thermal since only the lower energy level has non-zero occupation probability according to the Gibbs distribution.
As such this work-extraction step can be linked to the bit reset via Crooks' theorem; their Hamiltonians are the time-reverse of each other, and each protocol starts in the appropriate thermal state.

\label{section:NumericalSimulation}
These two processes, forming a forward-and-reverse pair governed Crooks' Theorem, provide a natural example with which to test our one-shot results.
We performed a Monte Carlo simulation of Landauer erasure and Szil\'{a}rd work extraction. Details appear  in Appendix~\ref{app:simulation}.
The simulation produced results consistent with Theorems~\ref{theorem:wDeltaBound}, as shown in Fig.~\ref{fig:Simulation}.

\subsection{Experiment: DNA-hairpin unzipping}
When single molecules are manipulated experimentally, ``fluctuations are relevant and deviations from the average behavior are observable''~\cite{AlemanyR10}. 
Some such experiments are known to obey fluctuation relations.
We show that data from DNA-hairpin experiments used previously to test Crooks' Theorem~\cite{MossaMFHR09,ManosasMFHR09,AlemanyR10} agree with the one-shot results in \S\ref{sec:Unification}. 
The agreement suggests that one-shot statistical mechanics might shed light on similar single-molecule experiments and applications. Alternatively, such experiments might be used to test one-shot statistical mechanics.

\begin{figure}
\centering
\includegraphics[width=.8\textwidth, clip=true]{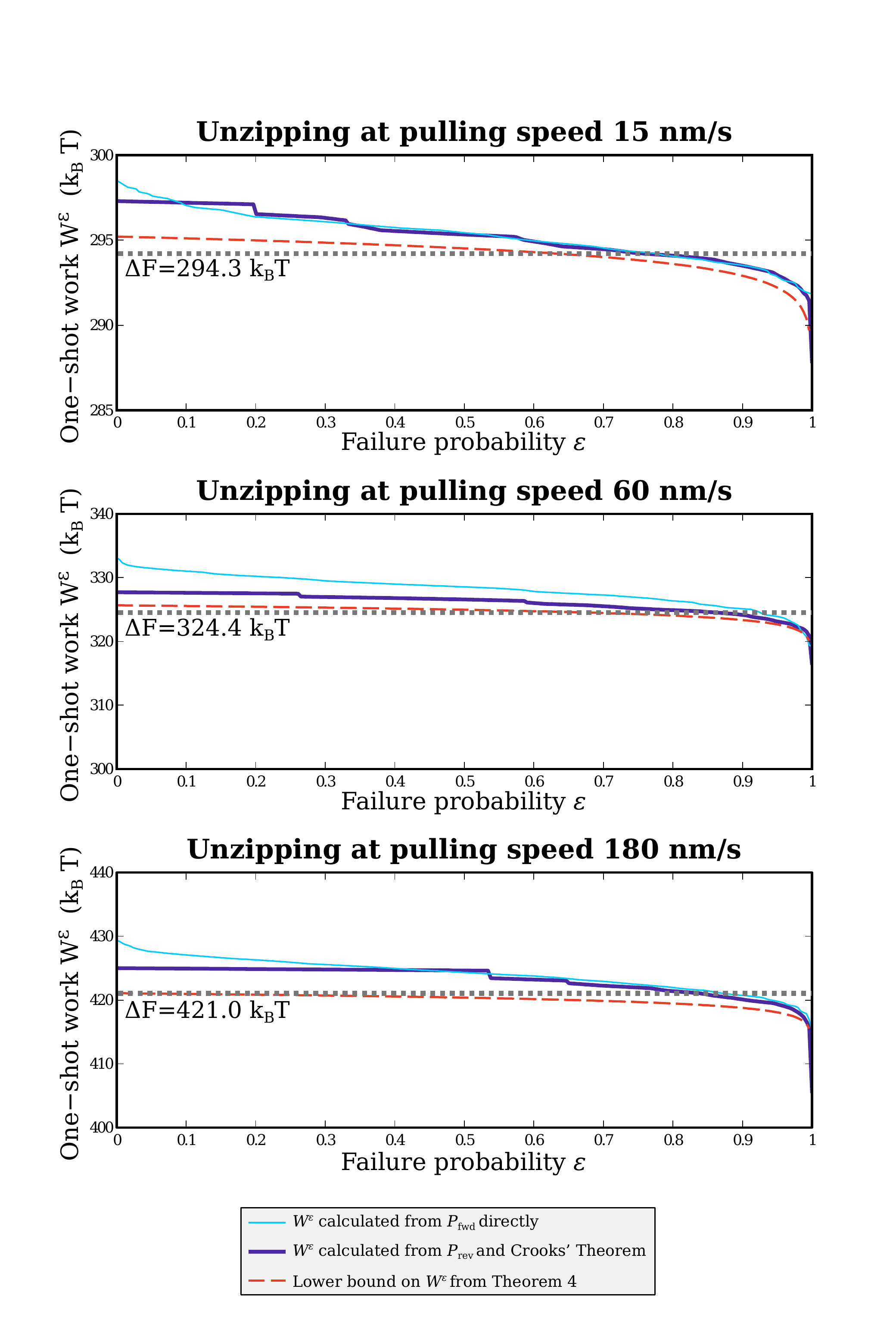}
\caption{\caphead{DNA-hairpin unzipping at three different speeds.}
Using work values from~\cite{MossaMFHR09,AlemanyR10,AlemanyREmails}, we have plotted the one-shot $\varepsilon$-required work $W^\varepsilon$ against the failure probability $\varepsilon$.
The theoretical lower bound (the red dashed line) derived in this article remains close to the experimentally measured value (light blue) and to the value inferred from the reverse protocol's work distribution via Crooks' Theorem (dark blue). 
The gray dotted line indicates the free-energy difference in each process.
The red-blue separation maximizes at about $5\,\kB T$, whereas each trial requires about $300\textendash400\,\kB T$, a relative difference of order $1\%$.
\label{fig:Experiment}
}
\end{figure}

A {\em DNA hairpin} is a short double helix of about 21 base pairs~\cite{MossaMFHR09,ManosasMFHR09,AlemanyR10}. 
The helix's two strands are called \emph{legs}. 
One end of one leg is attached to one end of the other leg, forming a shape like a hairpin's. 
The other end of each leg ends in a \emph{handle} formed from DNA. 
To each handle is attached a polystyrene or silica bead. 
One bead remains anchored on a micropipette. 
The other is caught in an optical trap that exerts a force. 
During the forward protocol, these optical tweezers pull the legs apart, unzipping the DNA into one strand. 
The more quickly the hairpin is split (the greater the \emph{pulling speed}), the more work is dissipated. 
During the reverse protocol, the helix is rezipped. 

We combined work distributions provided by Ritort and Alemany~\cite{MossaMFHR09,AlemanyR10,AlemanyREmails} with Theorem~\ref{theorem:WepsBound} (illustrated in Fig.~\ref{fig:Experiment}).
Each graph shows the work $W^\varepsilon$ that a given unzipping trial is $\varepsilon$-guaranteed to require, plotted against the failure probability $\varepsilon$.
We converted the data (a list of work values) into a probability distribution by forming a histogram with $50$ equally sized bins that span the range of work costs.
Energy is given in units of $\kB T$, such that $\beta=1$. 
For pulling speeds of $15$, $60$, and $\unit[180]{nm\,s^{-1}}$, the binning resulted in distributions whose $P^\mathrm{max} = 0.465 \beta$, $0.277 \beta$, and $0.162 \beta$ respectively.
In all three cases $P^\mathrm{max} < \beta$, such that the $H_{\rm max}$ term in Theorem~\ref{theorem:WepsBound} tightens the bound.

Whereas a work investment of about $300 \textendash 400\,\kB T$ is required to complete the procedure, the jittering between the light blue curve (the directly measured value of $W^\varepsilon$) and the dark blue curve (the value of $W^\varepsilon$ predicted from the reverse work distribution $P_{\rm rev}$ via Lemma~\ref{lemma:WEpsCalcnCrooks})
is of a scale less than $5\,\kB T$. 
Hence Crooks' Theorem interrelates the work probability distributions up to a discrepancy of around $1\%$, as argued in~\cite{MossaMFHR09,AlemanyR10}.
The red curve remains below the dark blue and light blue curves, confirming that the one-shot lower bound in Theorem~\ref{theorem:WepsBound} governs this experimental setting. 
The red curve remains close to---always within about $5\,\kB T$ of (around $1\%$ of $\Delta F$)---the light blue curve that represents directly simulated work investment.
This agreement between theory and experiment suggests that the application of one-shot results may shed light on similar single-molecule experiments and on applications such as molecular motors, thermal ratchets, and nanoscale engines~(e.g.~\cite{LacosteLM08,ChengSHEL12,SerreliLKL07}).

\section{Conclusions and outlook}
Crooks' Theorem relates probability distributions between a process and its reverse. We can manipulate these distributions using tools from one-shot statistical mechanics. 
As demonstrated in this article, combining the toolkits leads to bounds on the work likely to be required (or produced) in classical and quantum process.
Information about fluctuations tighten the bounds.
Fluctuation relations and one-shot statistical mechanics are not competitors, but are mutually compatible. Combining the approaches yields statements about quite general thermal systems.
The combination illustrates a possible bridge from one-shot theory to experimental settings through fluctuation theorems.

One experimental application is the cost of bit reset in modern microprocessors.
As miniaturization reduces the size of transistors further into the nanoscale~(e.g., \cite{MITNano}), limiting only the {\em average} heat dissipation does not ensure that devices work. 
Of increasing importance is a guarantee that no single bit reset dissipates any amount of heat (costs any amount of work) above some threshold that could damage the nanoscale device.
The relevant fluctuations can be studied with Crooks' Theorem and related, via the results in this article, to the one-shot maximum work cost.

The results in this article might be useful also when the work available to be spent on each trial is limited, or if the work extracted from each trial must exceed a certain threshold, except with bounded failure probability.
Such quantities might have uses also in a paranoia setting. 
An agent might have a known amount of work to invest, and one might need to ensure that the agent can not erase some information, except with some small probability. 
Similarly, one-shot work might be applicable in a verification scenario. 
Suppose an agent claims to be able to provide some amount of work.
To test the claim, one can request a transformation that costs more than this amount of work, except in a bounded number of cases.

Future research might reveal further links between one-shot statistical mechanics and fluctuation theorems. 
Here through the analysis of fluctuation theorems,
we identified a relationship between one-shot work quantities and the order-$\infty$ R\'enyi entropy.
By considering the R\'enyi divergence between the work distributions of a process and its reverse, one might find a relationship with one-shot {\em dissipated work}, following from the observation that the average dissipated work is proportional to the Kullback-Leibler divergence (average relative entropy) between the forwards and reverse work distributions~\cite{GomezMarinPvB08}.
Considering divergences between distributions avoids the issue of infinitely high entropy when the work distribution contain sharp peaks that have a finite ratio of height to each corresponding peak in the reverse distribution.
As such, this approach could provide general and robust tools for bridging one-shot statistical mechanics to fluctuation settings that hold for discrete work distributions in addition to the continuous distributions focused on in this article.

{\bf Note added:} Between the first presentation of these results and the current version of this article, related results have appeared in~\cite{DahlstenCBGYHV15,SalekW15}.


\ack
The authors are grateful for conversations with Anna~Alemany, Janet~Anders, Cormac~Browne, Tanapat~Deesuwan, Alex~Lucas, Jonathan~Oppenheim, Felix~Pollock, and Ibon~Santiago. 
This work was supported by a Virginia Gilloon Fellowship, an IQIM Fellowship, support from NSF grant PHY-0803371, the FQXi Large Grant for ``Time and the Structure of Quantum Theory,'' the~EPSRC, the John Templeton Foundation, the Leverhulme Trust, the Oxford Martin School, the National Research Foundation (Singapore), and the Ministry of Education (Singapore). 
The Institute for Quantum Information and Matter (IQIM) is an NSF Physics Frontiers Center with support from the Gordon and Betty Moore Foundation.
VV and OD acknowledge funding from the EU Collaborative
Project TherMiQ (Grant Agreement 618074).
NYH was visiting the University of Oxford under the auspices of Jonathan Barrett and the Department of Atomic and Laser Physics while much of this paper was developed.

\newcommand{\newblock}{}

\markboth{\ourtitle}{\ourtitle}

\newpage
\appendices
\section{Relationships among thermalization models}
\label{section:ThermModelsMain}
\label{section:ThermModels}

\begin{figure}[htb]
\centering
\includegraphics[width=0.65\textwidth]{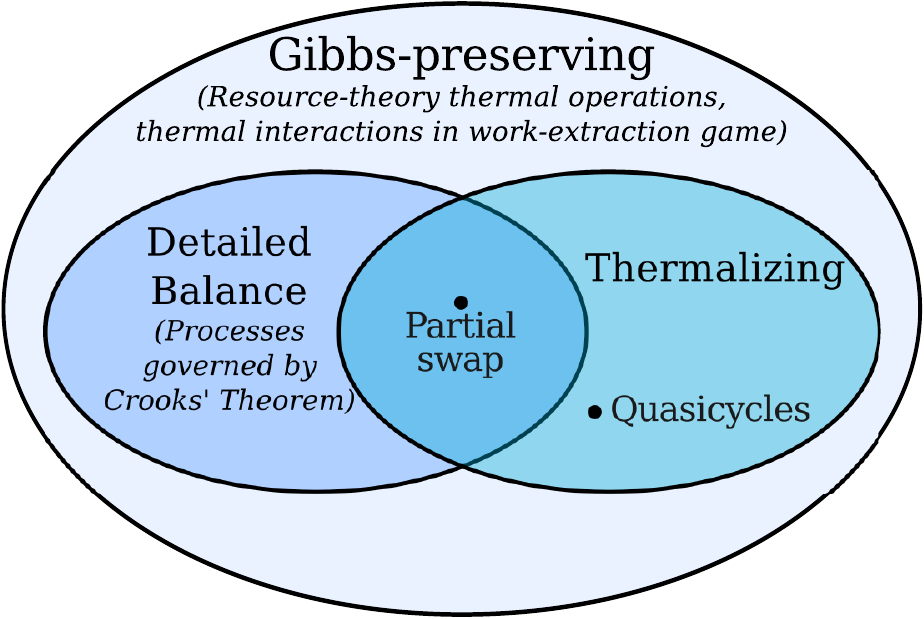}
\caption{\label{fig:ThermalVenn} Venn diagram illustrating the relationships among properties of stochastic matrices that model thermal interactions. Dots represent example models.}
\end{figure}

Consider a system $\mathcal{S}$ governed by a discrete $N$-level Hamiltonian $H$. Suppose that $\mathcal{S}$ interacts with a heat bath whose inverse temperature is $\beta$. An $N$-dimensional probability vector $\vec{s}$ represents the system's state. Whole or partial thermalization of $\mathcal{S}$ can be modeled as a sequence of discrete steps, each represented by a stochastic matrix. Different possible properties of such matrices characterize different models of heat exchanges. We address the properties of Gibbs-preservation, detailed balance, and thermalization. By $\vec{g}$, we denote the probability vector that represents the Gibbs state associated with $H$ and $\beta$:
\begin{equation*}
   \vec{g}   =
   \left( \frac{  e^{ - \beta E_1 }  }{Z},   \frac{  e^{ - \beta E_2 }  }{Z},   \ldots,   
            \frac{  e^{ - \beta E_N }  }{Z}  \right).
\end{equation*}

A matrix $M$ is \emph{Gibbs-preserving} relative to $H$ and $\beta$ if $M$ maps the corresponding Gibbs state to itself:
\begin{equation} \label{eq:GibbsPreserving}
   M \vec{g}   =   \vec{g}.
\end{equation}
Gibbs preservation constrains the unit-eigenvalue eigenspace of $M$. 
The set $\mathcal{G}$ of Gibbs-preserving matrices is equivalent to the set of thermal operations~\cite{HorodeckiO13} and to the set of thermal interactions in the game~\cite{EgloffDRV12}.

A strict subset of $\mathcal{G}$ is the set $\mathcal{D}$ of detailed-balanced matrices:   $\mathcal{D}  \subset  \mathcal{G}$. 
Let $A$ and $B$ denote microstates associated with the energies $E_A$ and $E_B$. $M$ encodes the probabilities that $\mathcal{S}$ transitions from $A$ to $B$, and vice versa, during one heat-exchange step. If these probabilities satisfy
\begin{equation} 
   \label{eq:DetailedBalance}
P(A  \mapsto  B) 
   =   
P(B \mapsto A)
e^{ - \beta ( E_B  -  E_A ) },
\end{equation}
$M$ obeys \emph{detailed balance}~\cite{Crooks99}. 

If the steps in an extended heat exchange obey detailed balance, the extended heat exchange obeys \emph{microscopic reversibility}~\cite{Crooks98}. 
From the assumption that heat exchanges are microscopically reversible, Crooks derives his theorem~\cite{Crooks99}.
Hence if the heat exchanges in a process obey detailed balance (and the other assumptions used to derive Crooks' Theorem, such as initialization to a thermal state), the process obeys Crooks' Theorem.

Crooks defines microscopic reversibility as follows while deriving his theorem~\cite{Crooks99}. Let $P( x(t) | \lambda_t )$ denote the probability that, if the external parameter varies as $\lambda_t$ during some forward trial, the state of the classical system $\mathcal{S}$ follows the phase-space trajectory $x(t)$. 
The ``corresponding time reversed path'' is denoted by
$( \bar{\lambda}( -t ),  \bar{x}(-t) )$.
Let the functional $Q[ x(t), \lambda_t ]$ denote the heat that $\mathcal{S}$ ejects if $\lambda_t$ and $x(t)$ characterize the forward trial. The heat exchange obeys \emph{microscopic reversibility} if
\begin{equation}
   \label{eq:MicroRevDefn}
   \frac{ P( x(t) | \lambda_t ) }{ P( \bar{x}(-t) | \bar{\lambda}(-t) ) }
   =   e^{ - \beta Q[ x(t), \lambda_t ] }.
\end{equation}

Another strict subset of Gibbs-preserving matrices is the set $\mathcal{T}$ of thermalizing matrices: $\mathcal{T} \subset \mathcal{G}$. 
We call a matrix $M$ \emph{thermalizing} if it evolves every state $\vec{s}$ of $\mathcal{S}$ toward the Gibbs state associated with $H$ and $\beta$:
\begin{equation} \label{eq:ThrmLimitMain}
   \lim_{n\to\infty} M^n \vec{s} = \vec{g}.
\end{equation}
Equation~\eref{eq:ThrmLimitMain} encapsulates intuitions about what ``thermalization'' means. Some matrices that model thermal interactions in the game and in the resource theories violate Eq.~\eref{eq:ThrmLimitMain}, as do some thermal interactions governed by Crooks' Theorem. $\mathcal{T}$ overlaps with $\mathcal{D}$.

The properties we have introduced---Gibbs preservation, detailed balance, and thermalization---imply relationships among Crooks' Theorem, theorems about the game, and resource-theory theorems. 
The game, as well as the resource theories, model the heat exchanges in some processes governed by Crooks' Theorem. 
Crooks' Theorem does not necessarily govern all heat exchanges possible in the game or in the resource theories.

\subsection{Proof of Venn diagram}

Let us justify our modeling of processes governed by Crooks' Theorem with the work-extraction game in~\cite{EgloffDRV12} and with resource theories. Different \emph{frameworks} (Crooks' theorem, the game, and the resource theories) model interactions with heat baths differently. One step in an interaction can be represented by a stochastic matrix that has at least one of three properties: Gibbs-preservation ($\mathcal{G}$), detailed balance ($\mathcal{D}$), and thermalization ($\mathcal{T}$). The relationships among these matrices are summarized in figure~\ref{fig:ThermalVenn} and in the following statements:
\begin{enumerate}

\item \makebox{$\mathcal{T} \subset \mathcal{G}$:} 
All thermalizing matrices are Gibbs-preserving (Lemma~\ref{lem:th_implies_gp}), but not vice versa (Lemma~\ref{lem:gp_not_implies_th}).

\item \makebox{$\mathcal{D} \subset \mathcal{G}$:} 
All detailed-balanced matrices are Gibbs-preserving (Lemma~\ref{lem:db_implies_gp}), but not vice versa (Lemma~\ref{lem:gp_not_implies_db}).

\item\makebox{$\mathcal{D} \neq \mathcal{T}$, $\mathcal{D} \not\subset \mathcal{T}$, and $\mathcal{T} \not\subset \mathcal{D}$:} 
Obeying detailed balance is not equivalent to being thermalizing, and neither category is a subset of the other.
(Lemma~\ref{lem:db_is_not_th}).

\item \makebox{$\mathcal{D} \cap \mathcal{T} \neq \emptyset$:} 
Some matrices are detailed-balanced and themalizing (Lemma~\ref{lem:db_th_exist}).
\end{enumerate}
While proving these claims, we justify the inclusion of two example matrices, the partial swap and quasicycles, in figure~\ref{fig:ThermalVenn}.

The proofs contain the following notation: $\mathcal{S}$ denotes a quasiclassical system that evolves under a Hamiltonian $H$ and that exchanges heat with a bath whose inverse temperature is $\beta$. By $\vec{s}  =  (s_1, s_2, \ldots, s_d)$, we denote the state of $\mathcal{S}$. The vector's elements are the diagonal elements of a density matrix relative the eigenbasis of $H$. The Gibbs state relative to $H$ and to $\beta$ is denoted by $\vec{g}$.

To prove some of the foregoing claims, we characterize thermalizing matrices with the Perron-Frobenius Theorem~\cite{HornJ85}. The theorem governs irreducible aperiodic nonnegative matrices $M$.\footnote{
By \emph{nonnegative,} we mean that every element of $M$ is no less than zero.}
Consider the eigenvalue $\lambda$ of $M$ that has the greatest absolute value. 
According to the Perron-Frobenius Theorem, $\lambda$ is the only positive real eigenvalue of $M$, and $\lambda$ is associated with the only nonnegative eigenvector $\vec{v}_\lambda$ of $M$.
Suppose that $M$ is stochastic, such that $\lambda = 1$. 
By the spectral decomposition theorem, $\lim_{n \to \infty} M^n \vec{s}  =  \vec{v}_\lambda$. 
If $\vec{v}_\lambda = \vec{g}$, the matrix is thermalizing.

%
%
\begin{lemma}   \label{lem:th_implies_gp}
All thermalizing matrices are Gibbs-preserving.

\begin{proof}
Let $M$ denote a thermalizing matrix associated with the same Hamiltonian and $\beta$ as $\vec{g}$. For all states $\vec{s}$ of $\mathcal{S}$,
\begin{equation}   \label{eq:ThrmLimit}
   \lim_{n\to\infty} M^n \vec{s} = \vec{g}.
\end{equation}
To prove the lemma by contradiction, we suppose that $M$ does not map $\vec{g}$  to itself:  $\vec{g} \not\mapsto  \vec{g}$.

Premultiplying Eq.~\eref{eq:ThrmLimit} by $M$ generates 
\begin{equation}   \label{eq:ThrmLimit2}
   \lim_{n\to\infty} M M^n \vec{s} = M \vec{g} \neq \vec{g}.
\end{equation}
This equation contradicts 
\begin{equation}
   \lim_{n\to\infty} M M^n   \vec{s} 
   = \lim_{n\to\infty} M^{n+1}   \vec{s} 
   = \lim_{n\to\infty} M^{n}   \vec{s} 
   =   \vec{g}.
\end{equation}
By the contrapositive, all thermalizing matrices are Gibbs-preserving.
\end{proof}
\end{lemma}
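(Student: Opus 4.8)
The plan is to establish the result by a short fixed-point argument: the defining limit of a thermalizing matrix forces the Gibbs vector $\vec{g}$ to be a fixed point of $M$, which is exactly Gibbs-preservation [Eq.~\eref{eq:GibbsPreserving}]. The only ingredient beyond elementary algebra is that a fixed linear map on a finite-dimensional space is continuous and therefore commutes with limits.

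First I would fix an arbitrary state $\vec{s}$ and start from the thermalizing condition $\lim_{n\to\infty} M^n \vec{s} = \vec{g}$. Applying $M$ to both sides and invoking continuity of the linear operator $M$ to pull it inside the limit gives $M\vec{g} = \lim_{n\to\infty} M\,(M^n \vec{s}) = \lim_{n\to\infty} M^{n+1}\vec{s}$. Next I would observe that $(M^{n+1}\vec{s})_n$ is merely a shifted copy of the convergent sequence $(M^n\vec{s})_n$ and hence has the same limit $\vec{g}$. Equating the two evaluations of the limit yields $M\vec{g} = \vec{g}$. I would package this as a proof by contradiction — assume $M\vec{g}\neq\vec{g}$, derive $\lim_{n\to\infty}M^{n+1}\vec{s} = M\vec{g}\neq\vec{g}$, and collide it with the reindexed limit $\vec{g}$ — so that the exposition matches the contrapositive phrasing used elsewhere in this appendix.

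The only step demanding care — really the hinge of the whole argument — is the interchange of $M$ with the limit. In finite dimensions this is an immediate consequence of the continuity of linear maps, so it is not a genuine obstacle, but it should be stated explicitly rather than assumed silently; everything else is bookkeeping and reindexing.
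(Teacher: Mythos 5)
Your proposal is correct and follows essentially the same route as the paper's proof: apply $M$ to the thermalizing limit, reindex the shifted sequence to recover the same limit $\vec{g}$, and conclude $M\vec{g}=\vec{g}$ (phrased as a contradiction). Your explicit remark that the interchange of $M$ with the limit rests on continuity of linear maps in finite dimensions is a point the paper leaves implicit, but it is the same argument.
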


%
%
\begin{lemma}   \label{lem:gp_not_implies_th}
Not all Gibbs-preserving matrices are thermalizing.
\begin{proof}
In general, this will be the case for matrices which have the Gibbs state as an eigenvector but do not otherwise satisfy the conditions of irreducibility or aperiodicity required for the Perron--Frobenius theorem to govern their behavior.
To prove the lemma by example, we construct one Gibbs-preserving matrix that is not thermalizing.
Consider a block-diagonal stochastic $N \times N$ matrix $M$. (Being block-diagonal, $M$ is reducible).
Let $M$ decompose into two submatrices: $M  = M_1  \oplus  M_2$. Let $M_1$ be defined on the first $n_1$ energy levels, and let $M_2$ be defined on the remaining $n_2$ energy levels.

Denote by $\vec{g_1}$ the Gibbs state associated with the first $n_1$ energies (and the partition function $Z_1$), and by $g_2$ the Gibbs state associated with the final $n_2$ energies (and the partition function $Z_2$). Suppose that 
\begin{equation}
   \tilde{g}_1   \equiv   g_1 \oplus (\underbrace{0, 0, \ldots, 0}_{n_2}) 
   \quad {\rm and} \quad 
   \tilde{g}_2   \equiv   
   (\underbrace{0, 0, \ldots, 0}_{n_1}) \oplus g_2
\end{equation}
are normalized probability eigenvectors of $M$, each associated with the unit eigenvalue. 
Every vector of the form
\begin{equation}
   \vec{\nu}_\alpha = \alpha \tilde{g}_1 + (1-\alpha) \tilde{g}_2
\end{equation}
is also a normalized probability eigenvector of $M$ associated with the unit eigenvalue. 

The possible forms of $\vec{\nu}_\alpha$ form a family. One member of the family is the Gibbs state $\vec{g}$, which corresponds to $\alpha = Z_1 / Z$ (wherein $Z$ denotes the total partition function). 
Hence $\vec{g} \in \{\vec{\nu}_\alpha\}_{\alpha\in[0,1]}$ is an eigenvector of $M$, and $M$ is Gibbs-preserving. 
However, $\vec{g}$ is not the only eigenvector associated with the unit eigenvalue. $M$ does not evolve every initial state toward $\vec{g}$. 
In general, $\lim_{n\to\infty} M^n \vec{s} = \nu_\alpha$,
wherein $\alpha$ is the total occupation probability of the first $n_1$ energy levels of $\vec{s}$. 
As some states $\vec{s}$ correspond to $\alpha\neq Z_1/Z$ and to $\nu_\alpha\neq g$, $M$ does not map every initial state to the Gibbs state. Hence $M$ is not thermalizing.
Our claim has been proved by example.

\end{proof}
\end{lemma}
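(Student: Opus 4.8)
The statement is an existence claim, so the natural route is a proof by explicit counterexample: I would exhibit a single stochastic matrix that fixes the Gibbs state yet fails to drive every initial state toward it. The conceptual lever is the gap between the two conditions. Gibbs-preservation [Eq.~\eref{eq:GibbsPreserving}] asks only that $\vec{g}$ be a unit-eigenvalue eigenvector---a constraint at one point of the state space---whereas thermalization [Eq.~\eref{eq:ThrmLimitMain}] demands that $\vec{g}$ be the \emph{unique} attracting stationary state. The Perron--Frobenius characterization recalled above guarantees such uniqueness only for irreducible aperiodic matrices, so I would deliberately violate irreducibility.

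Concretely, I would take $M = M_1 \oplus M_2$ to be block-diagonal, with $M_1$ acting on the lowest $n_1$ energy levels and $M_2$ on the remaining $n_2$, each block chosen stochastic and preserving the Gibbs state restricted to its own levels. Block-diagonality (hence reducibility) means $M$ never transfers probability weight between the two blocks. I would then form the embedded local Gibbs vectors $\tilde{g}_1$ and $\tilde{g}_2$ (padded with zeros on the complementary block) and verify that each is a unit-eigenvalue eigenvector of $M$, as is every convex combination $\vec{\nu}_\alpha = \alpha\tilde{g}_1 + (1-\alpha)\tilde{g}_2$.

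The key bookkeeping step is to locate the true global Gibbs state $\vec{g}$ within this family: with $Z = Z_1 + Z_2$, it corresponds to the single weight $\alpha = Z_1/Z$. This simultaneously shows that $M$ is Gibbs-preserving (since $\vec{g} \in \{\vec{\nu}_\alpha\}$) and that $\vec{g}$ is only one fixed point among a continuum. To finish, I would exhibit an initial state whose total weight on the first block differs from $Z_1/Z$; because weight cannot cross between blocks, the iterates converge to some $\vec{\nu}_\alpha \neq \vec{g}$, so Eq.~\eref{eq:ThrmLimitMain} fails and $M$ is not thermalizing. Taking $\vec{s} = \tilde{g}_1$ gives the cleanest instance, since it is already fixed.

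I expect the only real obstacle to be the normalization accounting---confirming that the restricted partition functions satisfy $Z = Z_1 + Z_2$ and that the global Gibbs state decomposes with exactly the weight $\alpha = Z_1/Z$---together with the mild care needed to phrase \emph{non}-thermalization correctly: it suffices to defeat the limit for a single initial state rather than for all of them.
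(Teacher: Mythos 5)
Your proposal is correct and follows essentially the same route as the paper: a reducible block-diagonal stochastic matrix $M = M_1 \oplus M_2$ whose fixed points form the continuum $\vec{\nu}_\alpha = \alpha\tilde{g}_1 + (1-\alpha)\tilde{g}_2$, with the global Gibbs state sitting at $\alpha = Z_1/Z$ and any initial state of different block weight converging elsewhere. Your choice of $\vec{s} = \tilde{g}_1$ as the explicit witness is a slightly cleaner way to finish, but the argument is the same.
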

\noindent Together, Lemmas~\ref{lem:th_implies_gp} and~\ref{lem:gp_not_implies_th} imply the strict relation $\mathcal{T} \subset \mathcal{G}$.

%
%
\begin{lemma}   \label{lem:db_implies_gp}
All matrices that obey detailed balance relative to the Hamiltonian $H$ and the inverse temperature $\beta$ preserve the Gibbs state $\vec{g}$ associated with $H$ and $\beta$.

\begin{proof}
We will write the forms of the elements in an arbitrary detailed-balanced stochastic $N \times N$ matrix $M$. By performing matrix multiplication explicitly, we show that $M \vec{g} =\vec{g}$.

Let $M_{ij}$ denote the element in the $i^{\rm th}$ row and $j^{\rm th}$ column of $M$. This element equals the probability that, upon beginning in the $j^{\rm th}$ energy level, a system $\mathcal{S}$ transitions to the $i^{\rm th}$ level. Let $g_i$ denote the thermal population of level $i$ (the $i^{\rm th}$ element of $\vec{g}$).

Detailed balance and stochasticity constrain the relationships among the $M_{ij}$. By the definition of detailed balance [Eq.~\eref{eq:DetailedBalance}], the elements in the lower left-hand triangle of $M$ are related to the elements in the upper right-hand triangle by
\begin{equation}   \label{eq:DBImp}
M_{ji} = M_{ij} \frac{ g_j }{ g_i } \qquad\forall~j > i.
\end{equation}

The matrix has the form
\begin{equation}
\label{eq:DB_SM}
   M =  \left(\begin{array}{cccc}
           M_{11} & M_{12} & M_{13}& \ldots \\
           M_{12} \frac{g_2}{g_1} & M_{22} & M_{23} & \ldots \\
           M_{13} \frac{g_3}{g_1} & M_{23} \frac{g_3}{g_2} & M_{33} & \ldots \\
           \vdots& \vdots & \vdots & \ddots 
           \end{array} \right).
\end{equation}

Because $M$ is stochastic, the elements in each column sum to one. This normalization condition fixes each diagonal element $M_{ii}$ as a function of the other $M_{ij}$ that occupy the same column:
\begin{eqnarray}
\label{eq:DB_SM_Norm}
M_{ii} 	& = & 1 - \sum_{j<i} M_{ji} - \sum_{j>i} M_{ij}\frac{g_j}{g_i}. 
\end{eqnarray}

Using index notation, we ascertain how $M$ transforms the Gibbs state $\vec{g}$:
\begin{eqnarray}
\sum_{j} M_{ij} g_j     
         & = & M_{ii} g_i + \sum_{j<i} M_{ij} g_j  + \sum_{j>i} M_{ij} g_j \nonumber \\
         & = & g_i - \sum_{j<i} M_{ji} g_i - \sum_{j>i} M_{ij}\frac{g_j}{g_i} g_i
				+ \sum_{j<i} M_{ij} g_j  + \sum_{j>i} M_{ij} g_j \nonumber \\
         & = & g_i - \sum_{j<i} M_{ij} g_j - \sum_{j>i} M_{ij}g_j
				+ \sum_{j<i} M_{ij} g_j  + \sum_{j>i} M_{ij} g_j \nonumber \\
		 & = & g_i
\end{eqnarray}
The second line follows from the substitution of Eq.~\eref{eq:DB_SM_Norm} for $M_{ii}$.
The third line follows from the substitution of Eq.~\eref{eq:DBImp} into the elements of first sum.

We have shown that $\vec{g}$ is an eigenvector of $M$ that corresponds to the unit eigenvalue. 
An $N \times N$ matrix $M$ that obeys detailed balance relative to $H$ and $\beta$ preserves the Gibbs state associated with $H$ and $\beta$.

\end{proof}
\end{lemma}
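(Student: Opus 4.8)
The plan is to convert the detailed-balance condition into a symmetric relation between the matrix entries and the Gibbs weights, and then to verify $M\vec{g}=\vec{g}$ componentwise by a one-line computation. I would first fix conventions: let $M_{ij}$ denote the probability that $\mathcal{S}$ transitions from the $j^{\rm th}$ level to the $i^{\rm th}$ level during one heat-exchange step, and let $g_i = e^{-\beta E_i}/Z$ denote the $i^{\rm th}$ Gibbs weight. Reading Eq.~\eref{eq:DetailedBalance} with the microstate $A$ identified as level $j$ and $B$ as level $i$ yields $M_{ij} = M_{ji}\,e^{-\beta(E_i-E_j)} = M_{ji}\,g_i/g_j$, equivalently the symmetric identity $M_{ij}\,g_j = M_{ji}\,g_i$ holding for every pair $(i,j)$.

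With this identity in hand, I would compute the $i^{\rm th}$ component of $M\vec{g}$ directly, using detailed balance to swap indices inside each summand, factoring out the common $g_i$, and finally invoking column-stochasticity of $M$ (the entries of column $i$, namely the $M_{ji}$, are the probabilities of all transitions out of level $i$, so $\sum_j M_{ji}=1$):
\begin{equation*}
(M\vec{g})_i = \sum_j M_{ij}\,g_j = \sum_j M_{ji}\,g_i = g_i\sum_j M_{ji} = g_i .
\end{equation*}
Since this holds for every $i$, I conclude $M\vec{g}=\vec{g}$, which is exactly the Gibbs-preservation property $M\in\mathcal{G}$.

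I do not expect any genuine obstacle: the argument is a two-step manipulation once the symmetric form of detailed balance is written down. The only point demanding care is the stochastic convention---whether $M$ is row- or column-stochastic---because the normalization used in the last step must involve the index that survives the detailed-balance swap (here the column sum $\sum_j M_{ji}=1$, not the row sum). An alternative, slightly longer route avoids the index swap altogether by writing $M$ in lower/upper-triangular form, solving the diagonal entries $M_{ii}$ from the column-normalization constraint, and then expanding $\sum_j M_{ij}g_j$ and cancelling term by term; this reaches the same conclusion but obscures the symmetry that makes the cancellation immediate.
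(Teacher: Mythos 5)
Your proof is correct and rests on the same two ingredients as the paper's own proof---the symmetric form of detailed balance, $M_{ij}\,g_j = M_{ji}\,g_i$, and column-stochasticity, $\sum_j M_{ji}=1$---so it is essentially the same argument. You merely execute the cancellation in one line instead of splitting the sum into $j<i$, $j=i$, $j>i$ and solving for the diagonal entries from the normalization condition as the paper does, and you correctly identify the one genuine pitfall (the row- versus column-stochastic convention, which must match the index that survives the detailed-balance swap).
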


%
%
\begin{lemma}   \label{lem:gp_not_implies_db}
Not every Gibbs-preserving matrix for some Hamiltonian $H$ and inverse temperature $\beta$ satisfies detailed balance for $H$ and $\beta$.

\begin{proof}
\begin{figure}[htb]
\centering
\includegraphics[width=0.35\textwidth]{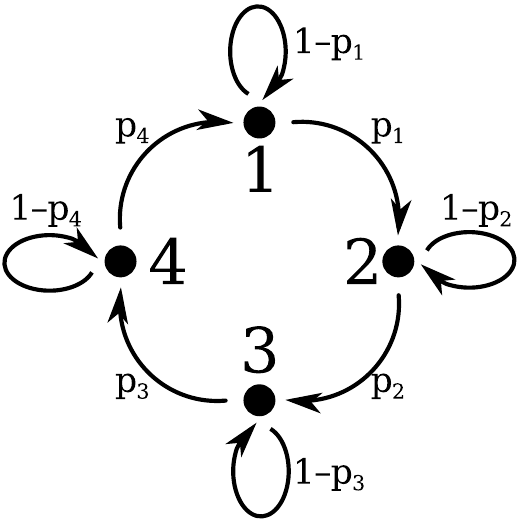}
\caption{\label{fig:Quasicycle} \caphead{Directed graph that illustrates a four-level quasicycle:} The associated matrix fails to satisfy detailed balance, but cunning choice of the $p_i$ ensures that the matrix is thermalizing.}
\end{figure}

Gibbs-preservation only places a restriction on one eigenvalue of a matrix, and so there remains enough freedom to choose a matrix exhibiting this property, but not detailed balance.
An example of such is a quasicycle, described in~\cite{HorodeckiO13}. 
A {\em quasicycle} is a process that has a probability $P(i   \mapsto   (i+1)  \mod{N})  \equiv  p_i$ of evolving a system $\mathcal{S}$ that occupies energy eigenstate $i$ to eigenstate $i+1$ and has a probability $1 - p_i$ of keeping $\mathcal{S}$ in state $i$. 
All $p_i>0$, and for at least one value of $i$, $p_i < 1$.
The directed graph of a quasicycle forms a ring in which at least one node also has a loop to itself, corresponding to a value of $(1-p_i)>0$. An example appears in figure~\ref{fig:Quasicycle}.
The probability that $i$ evolves to $j$ forms element $M_{ij}$ of matrix $M$. The matrix fails to satisfy detailed balance if $\mathcal{S}$ has more than three energy eigenstates, because ${P(i \mapsto (i+1) \mod{N}} )$ is finite, though 
${P((i+1)~\mod{N} \mapsto i)} = 0$.

We will show that, if the $p_i$ assume certain values, the Gibbs state $\vec{g}$ is an eigenvector of $M$. Let level $i = 1$ correspond to the lowest energy eigenvalue. Solutions of the form
\begin{equation}   \label{eq:ChosenP}
p_i =  p \: \frac{g_1}{g_i},
\end{equation}
wherein $p \in (0, 1)$ denotes a free parameter in the range $(0,1)$ are Gibbs-preserving.
Roughly, the greater the value of $p$, the more quickly the quasicycle is traversed.

To verify that Eq.~\eref{eq:ChosenP} describes a Gibbs-preserving matrix, we express the matrix multiplication $M \vec{g}$ in index form:
\begin{eqnarray}
   (M\vec{g})_1 & = & (1 - p_1)  g_1 + p_N   g_N   \\
   (M\vec{g})_i & = &  (1 - p_i)  g_i + p_{i-1}   g_{i-1} 
   \qquad i = 2\ldots N.
\end{eqnarray}
Upon substituting in from Eq.~\eref{eq:ChosenP}, we can simplify the equations to
\begin{eqnarray}
   (M\vec{g})_1
           & = &   g_1 \\
   (M\vec{g})_i
          & = &  g_i       \qquad i = 2\ldots N.
\end{eqnarray}
Hence $M \vec{g}  =  \vec{g}$, so $M$ preserves Gibbs states.
\end{proof}
\end{lemma}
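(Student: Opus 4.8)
The plan is to prove the lemma by explicit counterexample: I will exhibit a single stochastic matrix $M$ that preserves the Gibbs state $\vec{g}$ yet violates detailed balance. The motivation is a degrees-of-freedom count. Gibbs-preservation [Eq.~\eref{eq:GibbsPreserving}] constrains only the action of $M$ on the one vector $\vec{g}$, whereas detailed balance [Eq.~\eref{eq:DetailedBalance}] couples every off-diagonal pair $M_{ij}$ and $M_{ji}$. The former leaves far more freedom than the latter, so for a system with enough levels it should be possible to satisfy the first while breaking the second. Combined with Lemma~\ref{lem:db_implies_gp}, such an example upgrades the inclusion $\mathcal{D} \subseteq \mathcal{G}$ to the strict relation $\mathcal{D} \subsetneq \mathcal{G}$.

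First I would take $\mathcal{S}$ to have $N \geq 4$ energy eigenstates, ordered so that $i = 1$ is the ground state, and let $M$ be a \emph{quasicycle}: population at level $i$ either stays put or hops one rung around the ring, $i \mapsto (i+1)\mod{N}$, with probability $p_i > 0$. Detailed balance then fails with essentially no computation for $N > 3$, since the forward transition probability $P(i \mapsto (i+1)\mod{N}) = p_i$ is strictly positive while the reverse transition $P((i+1)\mod{N} \mapsto i)$ vanishes, so Eq.~\eref{eq:DetailedBalance} cannot hold for any positive Gibbs weights.

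The substantive step is to tune the hop probabilities so that $\vec{g}$ is nonetheless preserved. The key realization is that $M\vec{g} = \vec{g}$ is equivalent to the probability current balancing at every node, and that the current leaving level $i$ is $p_i g_i$. Choosing $p_i \propto 1/g_i$, explicitly $p_i = p\, g_1 / g_i$ for a constant $p$, makes this current uniform around the ring, $p_i g_i = p\, g_1$ for all $i$, so a constant circulation cancels node-by-node and $\vec{g}$ becomes an eigenvector of unit eigenvalue. I would confirm this in index form by checking $(M\vec{g})_i = (1 - p_i) g_i + p_{i-1} g_{i-1} = g_i$, with the wraparound term $p_N g_N$ supplying the inflow for $i = 1$. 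Physically, $M$ maintains the Gibbs distribution not through vanishing net fluxes (detailed balance) but through a nonzero uniform circulation, which is precisely why detailed balance is absent.

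The main obstacle is bookkeeping rather than conceptual: I must ensure the construction is a legitimate stochastic matrix. Because level $1$ is the ground state, $g_1 \geq g_i$, so $p_i = p\, g_1 / g_i \geq p$ can exceed $1$ unless $p$ is chosen small enough; the binding constraint is $p \leq \min_i (g_i / g_1) = g_N / g_1 = e^{-\beta(E_N - E_1)}$, under which every $p_i \in (0, 1]$ and $M$ is genuinely stochastic. With $p$ in this range and $N \geq 4$, the matrix is simultaneously Gibbs-preserving and not detailed-balanced, establishing the lemma; I would display the $N = 4$ case, as in Fig.~\ref{fig:Quasicycle}, to make the one-way ring structure concrete.
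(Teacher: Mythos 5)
Your proposal is correct and follows essentially the same route as the paper: a quasicycle on $N\geq 4$ levels with hop probabilities $p_i = p\,g_1/g_i$, which violates detailed balance because the reverse transitions vanish, yet preserves $\vec{g}$ via the uniform-circulation argument. Your added check that stochasticity forces $p \leq g_N/g_1$ is a worthwhile refinement the paper leaves implicit.
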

\noindent Together, Lemmas~\ref{lem:db_implies_gp} and~\ref{lem:gp_not_implies_db} imply the strict relation $\mathcal{D} \subset \mathcal{G}$.

%
%
\begin{lemma}   \label{lem:db_is_not_th}
Obeying detailed balance is not equivalent to thermalizing: $\mathcal{D} \neq \mathcal{T}$, nor is one category a subset of the other.

\begin{proof}
We will show that the quasicycle matrix $M$ described in the proof of Lemma~\ref{lem:gp_not_implies_db}---a matrix that does not obey detailed balance---is thermalizing.
Because $M$ is stochastic by construction, its greatest eigenvalue equals one.

In addition to being stochastic, $M$ is irreducible, aperiodic\footnote{
Though quasicycles look cyclic, they are aperiodic. 
For the purposes of the Perron-Frobenius Theorem, a matrix's period is the maximum value $k_{\rm max}$ of $k$ that satisfies the statement ``A system prepared in level $A$ has a nonzero probability of evolving to level $A$ only (but not necessarily) after multiples of $k$ steps''. For irreducible matrices, the possible values of $k$ do not depend on the choice of $A$~\cite{HornJ85}. When this index $k_\mathrm{max}=1$, the matrix is {\em aperiodic}.
Every quasicycle contains at least one node that transitions to itself [not all $p_i=1$ so at least one loop satisfies $P(i \mapsto i) = \left(1-p_i\right) > 0$]. Thus any value of $k$ satisfies the above statement.
Hence $k_{\rm max} = 1$, and quasicycles are aperiodic.
}
and non-negative. 
By the Perron-Frobenius Theorem, the greatest eigenvalue $\lambda$ of $M$ corresponds to the only nonnegative eigenvector $\vec{v}_\lambda$ of $M$.
This $\lambda = 1$, because $M$ is stochastic. As shown in the proof of Lemma~\ref{lem:gp_not_implies_db}, $\vec{v}_\lambda = \vec{g}$. As explained below the proof of Lemma~\ref{lem:th_implies_gp}, $\lim_{n \to \infty} M^n$ maps every vector $\vec{s}$ to $\vec{g}$: The matrices that represent quasicycles thermalize. $M$ does not obey detailed balance, as discussed in the proof of Lemma~\ref{lem:db_implies_gp}.

A simple example of a matrix that obeys detailed balance, but is not thermalizing, is the identity matrix.
Less trivially, it is possible in general to engineer a block-diagonal matrix of the form given in Lemma~\ref{lem:gp_not_implies_th}, and if each block obeys detailed balance, the matrix as a whole will also obey detailed balance [noting that $P(A\!\mapsto\!B) = P(B\!\mapsto\!A) = 0$ trivially satisfies detailed balance]. Such a matrix will not be thermalizing.
Hence we see that thermalizing is not equivalent to obeying detailed balance: $\mathcal{D} \neq \mathcal{T}$, and neither category is a subset of the other.

\end{proof}
\end{lemma}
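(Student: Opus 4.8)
The plan is to establish all three assertions—$\mathcal{D}\neq\mathcal{T}$, $\mathcal{D}\not\subset\mathcal{T}$, and $\mathcal{T}\not\subset\mathcal{D}$—in one stroke by exhibiting two concrete matrices: one that thermalizes but violates detailed balance, and one that obeys detailed balance but fails to thermalize. A matrix lying in $\mathcal{T}\setminus\mathcal{D}$ gives $\mathcal{T}\not\subset\mathcal{D}$; a matrix in $\mathcal{D}\setminus\mathcal{T}$ gives $\mathcal{D}\not\subset\mathcal{T}$; and either nonempty difference already forces $\mathcal{D}\neq\mathcal{T}$.

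For the first witness I would recycle the quasicycle matrix $M$ from Lemma~\ref{lem:gp_not_implies_db}, which is already known to be Gibbs-preserving yet to violate detailed balance (since $P(i\mapsto (i+1)\bmod N)>0$ while the reverse transition has probability zero once $N>3$). What remains is to show that $M$ thermalizes. I would invoke the Perron--Frobenius apparatus described beneath Lemma~\ref{lem:th_implies_gp}: $M$ is nonnegative and stochastic (so its dominant eigenvalue is $\lambda=1$) and irreducible, because the ring structure lets any level reach any other. Perron--Frobenius then singles out $\vec g$—the unique nonnegative eigenvector, identified in the proof of Lemma~\ref{lem:gp_not_implies_db}—and the spectral decomposition yields $\lim_{n\to\infty} M^n\vec s=\vec g$ for every input $\vec s$, i.e.\ $M\in\mathcal{T}$.

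For the second witness it suffices to take the identity matrix $\id$, or more generally a block-diagonal matrix each of whose blocks individually obeys detailed balance. The identity satisfies detailed balance trivially, since for distinct levels $A\neq B$ one has $P(A\mapsto B)=0=P(B\mapsto A)\,e^{-\beta(E_B-E_A)}$, yet $\id\,\vec s=\vec s$ for all $\vec s$, so no non-Gibbs initial state is driven toward $\vec g$ and $\id\notin\mathcal{T}$. A richer example in the same spirit is the reducible block-diagonal matrix of Lemma~\ref{lem:gp_not_implies_th}, chosen so that each block is detailed-balanced; its reducibility is precisely what obstructs thermalization.

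The one genuinely delicate point—and the step I expect to be the main obstacle—is verifying that the quasicycle is aperiodic, since the Perron--Frobenius conclusion fails without it. Superficially the cyclic ring suggests period $N$, but the presence of at least one self-loop (guaranteed by the requirement that $p_i<1$ for some $i$) means that a walker may return to its starting level after any sufficiently large number of steps, not merely after multiples of $N$. I would make this precise by arguing that the greatest common divisor of the lengths of closed walks through a fixed level collapses to $1$ once a length-$1$ loop is available, so the period index $k_{\max}=1$ and aperiodicity holds; this is the crux that legitimizes applying Perron--Frobenius to $M$.
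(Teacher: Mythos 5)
Your proposal is correct and follows essentially the same route as the paper's own proof: the quasicycle from Lemma~\ref{lem:gp_not_implies_db} as the witness in $\mathcal{T}\setminus\mathcal{D}$ (via Perron--Frobenius, with the self-loop argument for aperiodicity that the paper relegates to a footnote), and the identity or a detailed-balanced block-diagonal matrix as the witness in $\mathcal{D}\setminus\mathcal{T}$. No substantive differences to report.
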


%
%
\begin{lemma}   \label{lem:db_th_exist}
Some matrices are detailed-balanced and thermalizing: 
$\mathcal{D} \cap \mathcal{T} \neq \emptyset$.

\begin{proof}
We can prove this lemma by example. After reviewing the form of the partial-swap matrix $M$, we show that $M$ thermalizes, then show that $M$ obeys detailed balance.

Partial swap was introduced in Appendix~\ref{app:simulation}. A partial-swap operation has some probability $p$ of replacing the operated-on state $\vec{s}$ with a thermal state and a probability $1 - p$ of preserving $\vec{s}$. If $\vec{s}$ denotes the state of an $N$-level system,
\begin{equation} 
\label{eq:partial_swap}
   M  = \left(1-p\right) \id_N + p G,
\end{equation}
wherein $\id_N$ denotes the $N \times N$ identity and every column of the matrix $G$ is the thermal state $\vec{g}$.

Let us prove that $M$ thermalizes. $M$ is stochastic, as it is the probabilistic combination of $\id_N$ and $G$, which are stochastic. If $N$ is finite, $G$ is positive; so when $p > 0$, $M$ is positive. Positivity implies irreducibility and aperiodicity.
Hence the Perron-Frobenius Theorem\footnote{
For strictly positive matrices, the earlier Perron Theorem implies the same result.}
implies that $M$ has just one nonnegative eigenvector $\vec{v}_\lambda$ and that this eigenvector corresponds to $\lambda=1$.
Direct multiplication shows $M\vec{g}  =  \vec{g}$. Thus, $\vec{g} = \vec{v}_\lambda$ is the only nonnegative eigenvector of $M$ and corresponds to the largest eigenvalue. 
By the argument above Lemma~\ref{lem:th_implies_gp}, $M$ thermalizes.

To show that $M$ obeys detailed balance, we compare the matrix elements that represent the probabilities of transitions between states $i$ and $j$:
\begin{eqnarray}
   \frac{ P( i \mapsto j ) }{ P( j \mapsto i ) }
   & =   \frac{ M_{ji} }{ M_{ij} } \nonumber \\
   & =   \frac{ (1 - p) \delta_{ij}   +   p g_j }{ (1 - p) \delta_{ij}  +  p g_i }  \nonumber \\
   & =   e^{ - \beta (E_j  -  E_i ) },
\end{eqnarray}
wherein $\delta_{ij}$ denotes the Kronecker delta. This equation recapitulates the definition of detailed balance [Eq.~\eref{eq:DetailedBalance}]. Hence matrices---such as the partial swap---can  obey detailed balance while thermalizing.
\end{proof}
\end{lemma}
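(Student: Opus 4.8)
The plan is to prove this by exhibiting a single explicit matrix that is simultaneously detailed-balanced and thermalizing, thereby witnessing that $\mathcal{D} \cap \mathcal{T}$ is nonempty. A natural candidate is the \emph{partial-swap} matrix $M = (1-p)\id_N + pG$, where $p \in (0,1]$ and every column of the $N \times N$ matrix $G$ equals the Gibbs vector $\vec{g}$. I would verify the two properties separately: that $M$ drives every state to $\vec{g}$, and that its entries satisfy Eq.~\eref{eq:DetailedBalance}.

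First I would show that $M$ thermalizes. Since $\id_N$ and $G$ are both stochastic, their convex combination $M$ is stochastic, so its largest-modulus eigenvalue is $1$. For finite $N$ the Gibbs vector has strictly positive entries, so $G$---and hence $M$ for $p>0$---is a strictly positive matrix; positivity immediately supplies the irreducibility and aperiodicity needed for the Perron-Frobenius Theorem. That theorem then guarantees a unique nonnegative eigenvector, associated with the eigenvalue $1$. A one-line computation, using $G\vec{g}=\vec{g}$ (because each column of $G$ is $\vec{g}$ and $\vec{g}$ is normalized), gives $M\vec{g} = (1-p)\vec{g} + p\vec{g} = \vec{g}$, identifying $\vec{g}$ as that unique eigenvector. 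By the spectral argument recorded above Lemma~\ref{lem:th_implies_gp}, $\lim_{n\to\infty} M^n \vec{s} = \vec{g}$ for every $\vec{s}$, so $M \in \mathcal{T}$.

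Second I would check detailed balance directly from the matrix elements. Reading off $M_{ij} = (1-p)\delta_{ij} + p\,g_i$, the forward and reverse transition probabilities are $P(i \mapsto j) = M_{ji}$ and $P(j\mapsto i) = M_{ij}$, whose ratio is $\frac{(1-p)\delta_{ij} + p\,g_j}{(1-p)\delta_{ij} + p\,g_i}$. For $i \neq j$ the Kronecker deltas vanish and the ratio reduces to $g_j / g_i = e^{-\beta(E_j - E_i)}$; for $i=j$ the ratio is trivially $1 = e^{-\beta\cdot 0}$. Both cases reproduce Eq.~\eref{eq:DetailedBalance}, so $M \in \mathcal{D}$. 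I do not expect a genuine obstacle here, as the argument is essentially a verification. The only point requiring a little care is confirming that $M$ is truly thermalizing rather than merely Gibbs-preserving: one must invoke strict positivity (not just stochasticity) to license Perron-Frobenius and to rule out the extra unit eigenvectors that plague the block-diagonal counterexample of Lemma~\ref{lem:gp_not_implies_th}.
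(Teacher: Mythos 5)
Your proposal is correct and follows essentially the same route as the paper: the partial-swap matrix $M = (1-p)\id_N + pG$, strict positivity feeding the Perron--Frobenius Theorem to establish thermalization, and a direct element-wise check of detailed balance. The only (welcome) difference is that you spell out the $i=j$ and $i\neq j$ cases of the detailed-balance ratio separately, which the paper leaves implicit.
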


\section{Quantum derivation of generalized Jarzynski equalities}
\label{app:quantJE}
The results in \S\ref{sec:OneShotCFT} apply to classical and quantum systems. 
To shed extra light on quantum applications, we present an alternative derivation of Lemma~\ref{lemma:WEpsCalcnCrooks} for a quantum system whose energy spectrum is discrete and that lacks contact with the heat bath while its Hamiltonian changes. 

Work is defined as the difference between the outcomes of energy measurements near the protocol's start and end. This definition of work, which appears in~\cite{EngelN07,DornerCHFGV13,HideV10}, differs from the definition in~\cite{TalknerLH07}. 
The discrete version of $\chi^\varepsilon_{\rm rev}(\beta)$ will be defined via analogy with Eq.~\eref{eq:ChiEpsRev}:
\begin{equation}
   \chi_{\rm rev}^\varepsilon ( \beta )
   \equiv   \sum_{W  \geq  - W^\varepsilon}   P_\rev(W)  e^{- \beta W}.
\end{equation}

Let $\mathcal{S}$ denote a quantum system characterized by an external parameter $\lambda_t$ and governed by a Hamiltonian $H( \lambda_t )$ whose energy spectrum is discrete. 
Let $\beta$ denote the inverse temperature of the heat bath with which $\mathcal{S}$ interacts at times $t \in ( - \infty, - \tau)$.
At $t = - \tau$, $\mathcal{S}$ is projectively measured in the energy eigenbasis, then isolated from the bath. Until $t = \tau$, a unitary $U( 2\tau )$ evolves $H (\lambda_t)$ to $H_\tau$, and $\mathcal{S}$ is perturbed out of equilibrium. At $t = \tau$, the energy of $\mathcal{S}$ is measured projectively. Define the work $W$ performed on $\mathcal{S}$ as the difference between the measurements' outcomes. 

\begin{lemma*}
The $\varepsilon$-required work satisfies
\begin{equation}   \label{eq:WEpsCalcnCrooks3}
   \chi^{ \varepsilon }_\rev ( \beta )
   =   (1  -  \varepsilon)   e^{ \beta \Delta F }
   \quad \forall \varepsilon \in [0, 1].
\end{equation}
\end{lemma*}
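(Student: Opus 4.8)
The plan is to work directly inside the two-point-measurement (TPM) framework set up just above the statement, rather than invoking Crooks' Theorem as a black box. First I would write the forward joint probability of measuring energy $E^{-\tau}_n$ at $t=-\tau$ and $E^\tau_m$ at $t=\tau$ as $p^{\fwd}_{n\to m} = (e^{-\beta E^{-\tau}_n}/Z_{-\tau})\,|\langle m | U | n\rangle|^2$, so that the forward work distribution is $P_\fwd(W) = \sum_{n,m}\delta_{W,\,E^\tau_m - E^{-\tau}_n}\,p^{\fwd}_{n\to m}$. For the reverse protocol the system starts thermal in $H_\tau$, is measured, evolves under the time-reversed unitary $\tilde U$, and is measured again; the work done on the system is $E^{-\tau}_n - E^\tau_m = -W$, giving $p^{\rev}_{m\to n} = (e^{-\beta E^\tau_m}/Z_\tau)\,|\langle n | \tilde U | m\rangle|^2$ and $P_\rev(-W) = \sum_{n,m}\delta_{W,\,E^\tau_m - E^{-\tau}_n}\,p^{\rev}_{m\to n}$.

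Substituting the explicit form of $P_\rev$ into $\chi^\varepsilon_\rev(\beta) = \sum_{W\geq -W^\varepsilon} P_\rev(W)\,e^{-\beta W}$, the key observation is that the Boltzmann factor $e^{-\beta E^\tau_m}$ of the reverse initial state combines with $e^{-\beta W} = e^{-\beta(E^\tau_m - E^{-\tau}_n)}$ to reconstruct the forward initial weight $e^{-\beta E^{-\tau}_n}$. The partition-function ratio $Z_{-\tau}/Z_\tau$ then becomes $e^{\beta\Delta F}$ by the definition of $\Delta F$, and once the transition probabilities are identified (see below) each summand is exactly $e^{\beta\Delta F}$ times the corresponding forward summand. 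The constraint $W\geq -W^\varepsilon$ translates into $E^\tau_m - E^{-\tau}_n \leq W^\varepsilon$, so the remaining sum is the forward cumulative probability $\sum_{W\leq W^\varepsilon} P_\fwd(W)$, which equals $1-\varepsilon$ by the discrete analogue of Definition~\ref{def:WEps}. This yields $\chi^\varepsilon_\rev(\beta) = (1-\varepsilon)\,e^{\beta\Delta F}$ directly, for every $\varepsilon\in[0,1]$.

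The main obstacle is the microscopic-reversibility input: I must establish that the reverse transition probabilities coincide with the forward ones, $|\langle n | \tilde U | m\rangle|^2 = |\langle m | U | n\rangle|^2$, since it is this identity that makes the reverse summand proportional to the forward one. This follows from writing the reverse unitary as $\tilde U = \Theta U^\dagger \Theta^\dagger$, with $\Theta$ the antiunitary time-reversal operator, and using antiunitarity together with the time-reversal behaviour of the instantaneous energy eigenstates; it is the quantum expression of the Markovian microscopic reversibility assumption underlying Crooks' Theorem. A secondary point requiring care is the discrete bookkeeping: several trajectories $(n,m)$ can share the same work value $W$, and $W^\varepsilon$ is fixed through a cumulative sum rather than an integral, so I would state the discrete version of Definition~\ref{def:WEps} explicitly and check that the inequality $W\geq -W^\varepsilon$ partitions the trajectory sum cleanly. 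Once these are in place the argument reduces entirely to finite (or absolutely convergent) sums over the discrete spectrum, and no further analytic assumptions are needed.
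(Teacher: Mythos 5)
Your proposal is correct and follows essentially the same route as the paper's own derivation in Appendix~B: a two-point-measurement setup in which the reverse initial Boltzmann factor combines with $e^{-\beta W}$ to reconstruct the forward thermal weight, the partition-function ratio supplies $e^{\beta\Delta F}$, and the constrained sum collapses to the forward cumulative probability $1-\varepsilon$. The only cosmetic difference is that you establish the equality of forward and reverse transition probabilities via the antiunitary conjugation $\tilde U = \Theta U^\dagger \Theta^\dagger$, whereas the paper takes the reverse unitary to be $U^\dagger(2\tau)$ and obtains the same identity by cyclicity of the trace.
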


\begin{proof}
Let $\{   \ket{   \phi_m ( - \tau )   }   \}$ and $\{   E_m ( - \tau)   \}$   denote the eigenstates and eigenvalues of $H( \lambda_{ -\tau }   )$, and let $\{   \ket{   \phi_n (  \tau )   }   \}$   and   $\{   E_n ( \tau)   \}$   denote those of $H( \lambda_\tau   )$. If the measurements yield outcomes $m$ and $n$, the forward trial consumes   $W   \equiv   E_n( \tau )   -   E_m( - \tau )$.

The time-reversed protocol proceeds from $t = \infty$ to $t = -\infty$ and is defined as in the introduction.
Let $p_n ( \tau )$ denote the probability that the first measurement during a reverse trial yields $E_n (\tau)$; and let $p_\rev (m | n)$ denote the probability that, if the first measurement yields $E_n (\tau)$, the second yields $E_m( - \tau)$. By definition,
\begin{eqnarray}
\hspace{-2.5em}
   \chi^\varepsilon_\rev (\beta)   
    =
   \sum_{m, n}  p_\rev (m | n )   p_n ( \tau)   
   e^{ - \beta [  E_m ( - \tau )   -   E_n ( \tau )  ]   }   \label{eq:QDerivn} 
   \;   \Theta(   W^\varepsilon   -   E_n (  \tau )   +   E_m (  -\tau )    ),  
\end{eqnarray}
wherein
\begin{equation}   \label{eq:Theta} 
   \Theta (   W^\varepsilon   -   W_0   )   \equiv  
   \left\{ \begin{array}{lcr}
      \hfill 1 & \mathrm{if} & W^\varepsilon   \geq   W_0 \\
      \hfill 0 & otherwise. &
   \end{array}\right.
\end{equation}

Invoking $p_n ( \tau )  =  e^{ - \beta E_n ( \tau ) } / Z_{ \tau }  $, we cancel the $E_n ( \tau )$-dependent exponentials.
$p_\rev ( m | n )$ equals the  probability $p_\fwd ( n | m )$ that, if an energy measurement at 
$t = - \tau$ yields $E_m (- \tau )$ during a forward trial, an energy measurement at $\tau$ yields $E_n ( \tau )$:
\vbox{
\begin{eqnarray} 
   p_\rev ( m | n ) 
   & =   
   {\rm Tr} ( 
   \ketbra{ \phi_m ( -\tau ) }{ \phi_m ( -\tau ) }   \:  U^\dag ( 2 \tau )\:     
    \ketbra{ \phi_n (  \tau ) }{ \phi_n (  \tau ) }   \:  U( 2 \tau )      )   \nonumber\\
   & =
   {\rm Tr} ( \ketbra{ \phi_n (  \tau ) }{ \phi_n (  \tau ) }   \:  U( 2 \tau )\:     
   \ketbra{ \phi_m ( -\tau ) }{ \phi_m ( -\tau ) }   \:  U^\dag ( 2 \tau )      )  \nonumber \\
   & = 
   p_\fwd ( m | n ).
\end{eqnarray}
}
Substitution into Eq. (\ref{eq:QDerivn}) yields
\begin{eqnarray}
   \chi^\varepsilon_\rev (\beta)   
   =   \frac{1}{   Z_\tau   }
   \sum_{m, n}   p_\fwd (n | m )
   e^{   - \beta   E_m   ( - \tau)   }  
   \; \Theta   (    W^\varepsilon   -   E_n (  \tau )   +   E_m (  -\tau )   ).
\end{eqnarray}
Upon multiplying by   $Z_{ - \tau }    /   Z_{ - \tau } $, we replace 
$e^{   -   \beta    E_m ( - \tau )   }   /   Z_{   -\tau   }$   with   $p_m ( - \tau )$:
\begin{eqnarray}
   \chi^{\varepsilon}_\rev (\beta)
   & = 
   \frac{  Z_{ - \tau }  }{ Z_\tau  }
   \sum_{m, n}   p_m ( - \tau )   p_\fwd ( n | m )  
   \: \Theta (   W^\varepsilon   -   E_n (  \tau )   +   E_m (  -\tau )   )  \nonumber\\
   & =  (1 - \varepsilon )   e^{ \beta \Delta F }.
\end{eqnarray}
The final equality follows from
$F( \gamma )  =  - T \log Z$ and from the definition of $\varepsilon$.
\end{proof}
\noindent An analogous argument yields Eq.~\eref{eq:wEpsCalcn} in Lemma~\ref{lemma:WEpsCalcnCrooks}.

\section{Details of the work-extraction game}
\label{section:GameAppendix}
\subsection{Description of the game}
Let us briefly review the bound presented by Egloff et al.~\cite{EgloffDRV12}.
Consider the most efficient transformation $(\rho, H_\rho) \mapsto (\sigma, H_\sigma)$ that has a probability $1 - \delta$ of failing. 
That is, one sacrifices the certainty that the transformation will succeed, in hopes of extracting more work than can be gained from a certain-to-succeed transformation. All work that the system can output is collected; none is wasted. 
The transformation has a probability $1 - \delta$ of outputting at least the work 
\begin{equation}   \label{eq:EgloffDRV12Thm2}
   {w}^{\delta}_{\rm best} ( \rho, H_\rho   \mapsto   \sigma, H_\sigma )
   =  T  \log  \left(   M   \left(   
       \frac{  G^T(\rho) }{  1 - \delta  }  \:  ||  \:  G^T( \sigma )  
       \right)   \right).
\end{equation}
We will briefly overview the geometric definitions of \emph{Gibbs-rescaling} ($G^T$) and the \emph{relative mixedness} ($M$).

Let $\rho$ have the spectral decomposition
$\sum_{i = 1}^{ d_\rho } r_i  \ketbra{ E_i} {E_i}$ such that 
\begin{equation}
   r_1 e^{ \beta E_1 }   \geq   r_2 e^{ \beta E_2 } \geq \ldots   \geq   r_{ d_\rho }  e^{ \beta E_{ d_\rho } }.
\end{equation} 
Consider the histogram that represents the $r_i$. Gibbs-rescaling $\rho$ resizes each box in the histogram. The width of box $i$ changes from unity to $e^{ -\beta E_i }$, and the box's height increases by a factor of $e^{ \beta E_i }$. Denote by $h^T_\rho(u)$ the height of the point, on the rescaled histogram, whose $x$-coordinate is $u   \in [0, Z(H_\rho)]$. Integrating $h^T_\rho(u)$, we define the \emph{Gibbs-rescaled Lorenz curve} as the set of points
\begin{equation}
   \{ \left(   u,     L^T_\rho(u)      \right) \: | \:
       u   \in [0, Z(H_\rho)]    \},
   \quad {\rm wherein} \quad
   L^T_\rho(u)   \equiv   \int_{0}^u   h^T_\rho(u) du.
\end{equation}
The (unscaled) Lorenz curve $L_\rho$ is equivalent to $L_\rho^0$.
Upon Gibbs-rescaling $\rho$ and $\sigma$, we can compare the states' resourcefulness even though different Hamiltonians govern the states.

To incorporate the failure probability into the curve, we stretch $L^T_{ \rho }$ upward by a factor of
$1 / (1 - \delta)$. The resulting curve, $L^{T, \delta}_\rho$, encodes more reliable resourcefulness than $(\rho, H_\rho)$ possesses, because extractable work trades off with the failure probability $\delta$. Consider plotting $L^{T, \delta}_\rho$ on the same graph as $L^T_{\sigma}$. The curves are concave, bowing outward from the $x$-axis or stretching straight from $(0, 0)$ to $y = 1$. Consider compressing $L^T_\sigma$ leftward. $M$ denotes the inverse of the greatest factor by which 
$L^T_\sigma$ can compress without popping above $L^{T, \delta}_\rho$:
\begin{eqnarray}
   L^T_\rho(u)   & \geq &
   \left[ M   \left(   
       \frac{  G^T(\rho) }{  1 - \delta  }  \:  ||  \:  G^T( \sigma )  
       \right)    \right]^{-1}
    L^T_\sigma(u)
    \quad   \forall u \in [0,  \max ( Z( H_\rho ),  Z ( H_\sigma ) ]. \nonumber \\
&&
\end{eqnarray}
Illustrations appear in~\cite{EgloffDRV12}. While transforming $(\rho, H_\rho)$ into 
$(\sigma, H_\sigma)$, the player can extract no more work than $T \log M$: According to Eq.~(\ref{eq:EgloffDRV12Thm2}),
\begin{equation}   \label{eq:OriginalGameBound}
   w^\delta ( \rho, H_\rho   \mapsto   \sigma, H_\sigma )
   \leq  T  \log  \left(   M   \left(   
       \frac{  G^T(\rho) }{  1 - \delta  }  \:  ||  \:  G^T( \sigma )  
       \right)   \right).
\end{equation}

\subsection{Tightening and generalizing a one-shot bound with Crooks' Theorem}
\label{app:gametight}

Theorem~\ref{theorem:CrooksGamewEps2} strengthens the above Ineq.~\eref{eq:OriginalGameBound} (in the appropriate parameter regime), and Theorem~\ref{theorem:CrooksGameWeps}
generalizes this to work investment. 
These theorems are proved below:

%
%
%
%
\begin{theorem*} 
The work $\delta$-extractable from each Crooks-type reverse trial satisfies
\begin{equation}   \label{eq:CrooksGamew2}
   w^{\delta}   
   \leq \frac{1}{ \beta }   \left[
   \log M   \left(   
       \frac{  G^{T} ( \gamma_\tau ) }{  1 - \delta  }  \:  
       ||  \:  G^{T} (  \gamma_{-\tau} )  
       \right)
       -   H_\infty (P_{\fwd } ) 
   \right]
   \quad   \forall  \,  \delta \in [0, 1).
\end{equation}

\begin{proof}
During the reverse protocol, the state of a system $\mathcal{S}$ transforms as
\begin{equation}
   (    \gamma_{\tau},   H_\tau   )
   \mapsto
   ( \sigma,   H_{ -\tau } )
   \mapsto
   (   \gamma_{ - \tau },   H_{ -\tau } ),
\end{equation}
wherein $\sigma$ denotes some density operator that likely is not an equilibrium state.

The set of strategies for transforming from
$   (    \gamma_{\tau},   H_\tau   ) $ to
$   (   \gamma_{ - \tau },   H_{ -\tau } )$ 
contains a subset of all strategies that achieve this transformation via $( \sigma,   H_{ -\tau } )$.
As such, when optimizing to maximize the work production allowing an error rate of $\delta$,
\begin{eqnarray}
   w^{ \delta}_{\rm best} 
(    \gamma_{\tau},   H_\tau
      \mapsto
      \gamma_{ - \tau },   H_{ -\tau } )
   & \geq
   w^{ \delta}_{\rm best} 
      (    \gamma_{\tau},   H_\tau
      \mapsto   \sigma,   H_{ -\tau } ) \nonumber \\
   & \quad +
   w^0_{\rm best}
      (   \sigma,   H_{ -\tau }
      \mapsto   \gamma_{ - \tau },   H_{ -\tau } ).
\end{eqnarray}
The final term (which only involves thermalization) always succeeds, does not contribute either to the failure probability or the work cost, and hence can be eliminated.
An arbitrary, possibly suboptimal, strategy from    
$(    \gamma_{\tau},   H_\tau   )$
   to
  $ ( \sigma,   H_{ -\tau } )$
that generates work $w^\delta$ will be bounded by this value:
\begin{equation}
   w^{ \delta}   \leq
   w^{ \delta}_{\rm best} 
(    \gamma_{\tau},   H_\tau
      \mapsto
      \gamma_{ - \tau },   H_{ -\tau } ).
\end{equation}

Hence, by Eq.~(\ref{eq:EgloffDRV12Thm2}),
\begin{equation} \label{eq:ToGameBound1}
   w^{ \delta}
   \leq
   \frac{1}{\beta}   \log
   M   \left(   
       \frac{  G^T(  \gamma_{\tau}  ) }{  1 - \delta  }  \:  
       ||  \:  G^T(  \gamma_{ - \tau }  )  
       \right).
\end{equation}
Let us calculate $M$.
$L^T_{ \gamma_{ - \tau } }$ stretches straight from $(0, 0)$ to $(  Z_{ - \tau }, 1)$,  
whereas $L^T_{ \gamma_{\tau} } / (1 - \delta )$ stretches straight to 
$(  Z_{\tau} , \frac{1}{1 - \delta} )$. Compressing    
$L^T_{ \gamma_{\tau} }(u) / (1 - \delta) $ leftward by a factor of 
$ M^{-1}  =  \frac{ Z_{ \tau } (1 - \delta) }{   Z_{- \tau} }$ keeps the latter curve from dipping below $L^T_{ \gamma_{ - \tau } }(u)$. 
Hence
\begin{eqnarray}   
   \frac{1}{\beta} \log   M   \left(   
       \frac{  G^T(  \gamma_{\tau}  ) }{  1 - \delta  }  \:  
       ||  \:  G^T(  \gamma_{ - \tau }  )  
       \right)
   & =
   \frac{1}{\beta} \left[    \log \left(  
            \frac{ Z_{-\tau}  }{  Z_\tau }
            \right)  
          -  \log (1 - \delta) \right]   \nonumber \\
   & =   \label{eq:GameBoundw}
           \Delta F  -  \frac{1}{\beta}  \log (1 - \delta),
\end{eqnarray}
wherein $\Delta F   \equiv   F( \gamma_\tau)   -   F( \gamma_{-\tau} )$.
We substitute from Eq.~(\ref{eq:GameBoundw}) into the inequality~(\ref{eq:wEpsBound}) derived from Crooks' Theorem in Theorem~\ref{theorem:wDeltaBound}.
\end{proof}
\end{theorem*}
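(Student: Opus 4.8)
The plan is to bolt the Crooks-based bound of Theorem~\ref{theorem:wDeltaBound} onto the Egloff \emph{et al.}\ game bound [Eq.~(\ref{eq:EgloffDRV12Thm2})], re-expressing the free-energy piece of the former in the relative-mixedness language of the latter. Theorem~\ref{theorem:wDeltaBound} already supplies $w^\delta \leq \Delta F - \frac{1}{\beta}[H^\beta_\infty(P_\fwd) + \log(1-\delta)]$, so the entire task reduces to identifying $\Delta F - \frac{1}{\beta}\log(1-\delta)$ with $\frac{1}{\beta}\log M\!\left(G^T(\gamma_\tau)/(1-\delta) \,||\, G^T(\gamma_{-\tau})\right)$ and substituting.

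First I would set up the reverse protocol as the two-stage map $(\gamma_\tau, H_\tau) \mapsto (\sigma, H_{-\tau}) \mapsto (\gamma_{-\tau}, H_{-\tau})$, where $\sigma$ is the generically nonequilibrium state reached after the drive is reversed and before the system rethermalizes. I would then invoke a subordination argument: strategies that factor through the intermediate $\sigma$ form a subset of all strategies realizing $(\gamma_\tau, H_\tau) \mapsto (\gamma_{-\tau}, H_{-\tau})$, so the game-optimal $\delta$-work for the direct transformation dominates any particular (possibly suboptimal) reverse-protocol strategy. Because the closing leg $\sigma \mapsto \gamma_{-\tau}$ is a pure thermalization that neither costs work nor can fail, it drops out of both the work ledger and the failure budget, yielding $w^\delta \leq w^\delta_{\rm best}(\gamma_\tau, H_\tau \mapsto \gamma_{-\tau}, H_{-\tau})$, which Eq.~(\ref{eq:EgloffDRV12Thm2}) evaluates as $\frac{1}{\beta}\log M\!\left(G^T(\gamma_\tau)/(1-\delta) \,||\, G^T(\gamma_{-\tau})\right)$.

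Next I would compute $M$ explicitly for the two thermal endpoints. Since both $\gamma_\tau$ and $\gamma_{-\tau}$ are Gibbs states, their Gibbs-rescaled Lorenz curves are straight segments: $L^T_{\gamma_{-\tau}}$ runs from $(0,0)$ to $(Z_{-\tau}, 1)$, while $L^T_{\gamma_\tau}/(1-\delta)$ runs to $(Z_\tau, 1/(1-\delta))$. The largest leftward compression that keeps the stretched curve from rising above the other has factor $M^{-1} = Z_\tau(1-\delta)/Z_{-\tau}$, so $\frac{1}{\beta}\log M = \Delta F - \frac{1}{\beta}\log(1-\delta)$ once one uses $F = -T\log Z$ and $\Delta F = F(\gamma_\tau) - F(\gamma_{-\tau})$. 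Substituting this identity into Ineq.~(\ref{eq:wEpsBound}) turns its $\Delta F - \frac{1}{\beta}\log(1-\delta)$ into $\frac{1}{\beta}\log M$, leaving the $-\frac{1}{\beta}H^\beta_\infty(P_\fwd)$ correction untouched and producing the claimed inequality for every $\delta \in [0,1)$.

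The hard part will be the geometric book-keeping of the middle step rather than any deep inequality: one must verify that two straight Gibbs-rescaled Lorenz curves really do yield $M^{-1} = Z_\tau(1-\delta)/Z_{-\tau}$, and confirm that the resulting free-energy combination is exactly the $\Delta F$ that Crooks' Theorem feeds into Theorem~\ref{theorem:wDeltaBound}, so that the two frameworks share a common $\Delta F$ and the substitution is legitimate. The subordination step is conceptually delicate but needs only the observation that appending a free, deterministic thermalization to any strategy can raise neither its work cost nor its failure probability.
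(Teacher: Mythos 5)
Your proposal is correct and follows essentially the same route as the paper's own proof: the same two-stage decomposition of the reverse protocol with the subordination argument for dropping the free thermalization leg, the same explicit evaluation of the relative mixedness $M$ from the straight Gibbs-rescaled Lorenz curves giving $M^{-1} = Z_\tau(1-\delta)/Z_{-\tau}$, and the same final substitution of $\frac{1}{\beta}\log M = \Delta F - \frac{1}{\beta}\log(1-\delta)$ into the bound of Theorem~\ref{theorem:wDeltaBound}.
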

\noindent Crooks' Theorem introduces an $H_\infty$ into the bound, derived from~\cite{EgloffDRV12}, on extractable work. If $P_\fwd$ satisfies Ineq.~\eref{eq:GoodBound}, this $H_\infty$ strengthens the bound. 
A work cost can similarly be derived from~\cite{EgloffDRV12}, then enhanced with Crooks' Theorem.

\section{Details of thermodynamic resource theories}
\label{section:ResourceTheory}

\subsection{Description of thermodynamic resource theories}
Each thermodynamic resource theory models energy-preserving transformations performed with a heat bath characterized by an inverse temperature $\beta$.
To specify a state, one specifies a density operator and a Hamiltonian: $(\rho, H)$. Sums of Hamiltonians will be denoted by
$H_1  +  H_2  \equiv  H_1  \otimes  \id    +    \id \otimes H_2$. 

\emph{Thermal operations} can be performed for free. 
Each consists of three steps:
(1) A Gibbs state relative to $\beta$ and to any Hamiltonian $H_\gamma$ can be drawn from the bath:
\begin{equation}
   ( \gamma,   H_\gamma),
   \quad {\rm wherein} \quad
   \gamma  \equiv  \frac{  e^{ - \beta H_\gamma } }{ Z }.
\end{equation}
[Below, the Gibbs state relative to $\beta$ and to $H_\gamma$ will be denoted also by 
$\gamma( H_\gamma )$.]
Any unitary $U$ that conserves the total energy can be implemented, and any subsystem $A$ associated with its own Hamiltonian $H_A$ can be discarded. 
Each thermal operation on $(\rho, H)$ has the form
\begin{equation}
   (\rho, H)  \mapsto  
   \Big( {\rm Tr}_A   (    U    [ \rho   \otimes   \gamma ]    U^\dag ),   
 H + H_\gamma  - H_A
 \Big),
\end{equation}
wherein $[U, H + H_\gamma] = 0$.

\subsection{Applicability of Crooks' Theorem}
\label{app:BattAndClock}
Some resource-theory operations obey detailed balance and Markovianity.
We can use resource theories to model processes governed by Crooks' Theorem if we define a battery and a clock. 
Our model for the battery appears in~\cite{YungerHalpernR14} and resembles the model in~\cite{SkrzypczykSP14}. 
The semiclassical battery $B$ has closely spaced energy levels and occupies an energy eigenstate: 
\begin{equation}
   B_i   \equiv   ( \ketbra{ B_i }{ B_i },  H_B ),
   \quad {\rm wherein} \quad
   H_B  \equiv \sum_i  E_{B_i} \ketbra{ {B_i} }{ {B_i} }.
\end{equation}
If $E_{B_i}$ is large, a work-costing (forward) process can transfer work from the battery to $\mathcal{S}$. 
If $E_{B_i}$ is small, a work-extraction (reverse) process can transfer work from $\mathcal{S}$ to the battery.

We model the evolution of $H$ with a clock $C$ that occupies a pure state $\ket{C_j}$~\cite{BrandaoHORS13,HorodeckiO13}.
The changing of $\ket{C_j}$, like the movement of a clock hand, models the passing of instants. 
In processes governed by Crooks' Theorem, $H  =  H( \lambda_t )$. We discretize $t$ such that the system's Hamiltonian is $H( \lambda_{t_j} )$ when the clock occupies the state $\ket{C_j}$. In the notation introduced earlier, $t_1 = -\tau$, and $t_n = \tau$. The composite-system Hamiltonian 
\begin{equation}
   H_{\rm tot}   \equiv   
   \sum_{i = 1}^n   H( \lambda_{t_i} )    \otimes   \ketbra{ C_i }{ C_i }    
      \otimes \id
   +   \id   \otimes   \id   \otimes   \sum_j  E_{B_j} \ketbra{ B_j }{ B_j }
\end{equation} 
remains constant.

Having defined the battery and clock, we define the work extractable from, and the work cost of, a protocol. 
Let $E_{B_0} = 0$. The most work extractable from the reverse protocol equals the greatest $E_{B_m}$ for which some sequence of thermal operations evolves the state of $\mathcal{S}CB$ as
\begin{eqnarray}
   \gamma( H_{ - \tau } )   \otimes   \ketbra{ 1 }{ 1 }   \otimes   \ketbra{ 0 }{ 0 }
&   \mapsto
   \rho (t_2)   \otimes   \ketbra{ 2 }{ 2 }   \otimes   \ketbra{ E_{B_2} }{ E_{B_2} }
   \mapsto \ldots 
\nonumber \\ 
& \mapsto
 \gamma( H_\tau )   \otimes   \ketbra{ m }{ m }   \otimes   \ketbra{ E_{B_m} }{ E_{B_m} },
\end{eqnarray}
wherein $\rho (t_i)$ represents the state occupied by $\mathcal{S}$ at time $t_i$. 
The forward protocol's minimum work cost equals the least $E_{B_n}$ for which a sequence of thermal operations implements
\begin{equation}
   \gamma( H_\tau )   \otimes   \ketbra{ n }{ n }   \otimes   \ketbra{ E_{B_n} }{ E_{B_n} }
   \mapsto \ldots \mapsto
   \gamma( H_{ - \tau } )   \otimes   \ketbra{ 1 }{ 1 }   \otimes   \ketbra{ 0 }{ 0 }.
\end{equation}

Results in~\cite{HorodeckiO13} can be applied if all the states commute with their Hamiltonians. Horodecki and Oppenheim have calculated the maximum work yield, or minimum work cost, of any semiclassical transformation 
$(\rho, H_\rho)  \mapsto   (\sigma, H_\sigma)$ by thermal operations. They have calculated also faulty transformations' work yields and work costs. A faulty transformation generates a state 
$(\sigma', H_\sigma)$ that differs from the desired state.
The discrepancy is quantified by the trace distance between the density operators:
\begin{equation}
   \frac{1}{2} || \sigma - \sigma' ||_1  \leq  \epsilon \in [0, 1].
\end{equation} 
According to~\cite{HorodeckiO13}, this $\epsilon$ can be interpreted as the probability that the process fails to accomplish its mission, similarly to $\varepsilon$ and $\delta$. Generalizations to nonclassical states appear in~\cite{LostaglioJR15,LostaglioKJR15}.

\section{Details of numerical simulation}
\label{app:simulation}

Our simulation of Landauer bit reset and Szilard work extraction resembles the scenario presented in~\cite{BrowneGDV14}, that models a two-level quasiclassical system $\mathcal{S}$. 
At each time $t$, the energy $\mathcal{E}(t)$ of $\mathcal{S}$ equals $E_0$ or $E_1(t)$. 
The state of $\mathcal{S}$ is represented by a vector $\vec{s}(t) = ( p(t), 1 - p(t) )$, wherein $p(t)$ equals the probability that $\mathcal{E}(t)  =  E_0$.

If observers have different amounts of information about $\mathcal{E}(t)$, they ascribe different values to $p(t)$. 
Suppose an agent draws $\mathcal{S}$ from a temperature-$(1/ \beta)$ heat bath. 
According to this \emph{ignorant agent},
\begin{equation}
   \vec{s}(t)   =   \left(
   \frac{   e^{ - \beta E_0   }   }{   Z(t)   },
    \frac{   e^{ - \beta E_1(t)   }   }{   Z(t)   }   \right).
\end{equation}
According to an \emph{omniscient observer}, $\vec{s}(t) = (1, 0)$ or $(0, 1)$. The code is written from the perspective of an omniscient observer. On average, the code's predictions coincide with the predictions that code written by an ignorant agent would make.

While $t \in (-\infty, -\tau)$ during the forward (erasure) protocol, $E_1(t)  =  E_0  =  0$, and $\mathcal{S}$ is thermally equilibrated. According to the ignorant agent, $\vec{s}( t ) = ( \frac{1}{2}, \frac{1}{2} )$. Beginning at $t = - \tau$, the agent raises $E_1$ by the infinitesimal amount $dE$ while preserving $\vec{s}(t)$. Then, the agent couples $\mathcal{S}$ to the bath for some time interval. The raising and coupling are repeated until $t = \tau$ and $E_1(\tau) = E_{\rm max}$. 

The agent's actions are simulated as follows: Our code has a probability $\frac{1}{2}$ of representing the initial state $\vec{s}(-\tau)$ with $(1, 0)$ and a probability $\frac{1}{2}$ of representing $\vec{s}(-\tau)$ with $(0, 1)$. Consider one thermal interaction that occurs at some $t  \in  (- \tau,  \tau )$. If $\vec{s}( t ) = (1, 0)$ before the thermal interaction, the agent invests no work to raise $E_1$. If $\vec{s}( t) = (0, 1)$, the agent invests work $dE$.

A probabilistic swap models each interaction with the heat bath~\cite{BrowneGDV14}. $\vec{s}(t)$ has a probability $P_{\rm swap}$ of being exchanged with a pure state sampled from a Gibbs distribution. That is, $\vec{s}(t)$ has a probability 
$P_{\rm swap} e^{- \beta E_0 }  / { Z(t) } $ of being interchanged with $(1, 0)$, a probability 
$P_{\rm swap} e^{- \beta E_1(t) } / { Z(t) } $ of being interchanged with $(0, 1)$, and a probability
$1 - P_{\rm swap}$ of remaining unchanged: $\vec{s} ( t + dt ) = \vec{s} (t)$. The longer $\mathcal{S}$ couples to the reservoir, the greater the $P_{\rm swap}$.
The ignorant agent represents this thermalization with
$\vec{s}(t + dt)   =   M( t; P_{\rm swap} ) \vec{s}(t)$, wherein $M( t; P_{\rm swap} )$ is a thermalizing matrix that obeys detailed balance (see the proof of Lemma~\ref{lem:db_th_exist}).
Because $\vec{s}( t + dt )$ depends on no earlier state except $\vec{s}(t)$, the evolution is Markovian.

Ideally, the agent increases $E_1(t)$ and thermalizes $\mathcal{S}$ repeatedly until $t = \tau$, 
$E_1 (\tau) = \infty$, and $\vec{s}(\tau) = (1, 0)$ according to both observers. The simulated $E_1(t)$ peaks at some large $E_{\rm max}$, and the final state has a high probability of being $(1, 0)$~\cite{BrowneGDV14}. During stage two of erasure, $\mathcal{S}$ is thermally isolated, and $E_1$ decreases to zero. Because $\vec{s}$ has no weight on $E_1$, this stage costs no work.

Reversing erasure amounts to extracting work. Initially, $E_1 = E_0 = 0$, and $\vec{s} = (1, 0)$. As $\mathcal{S}$ remains thermally isolated, $E_1$ rises to infinity (approximated by $E_{\rm max}$) without consuming work. During stage two of work extraction, the agent repeatedly lowers $E_1(t)$ by $dE$ and thermalizes $\mathcal{S}$. Whenever the agent lowers $E_1(t)$ while $\vec{s}(t) = (0, 1)$ to the omniscient observer, $\mathcal{S}$ outputs work  $dE$. Once $t = -\tau$ such that $E_1(-\tau) = 0$, $\mathcal{S}$ thermalizes until the probability that $\vec{s}(t) = (1, 0)$ equals the probability that $\vec{s}(t) = (0, 1)$.

To produce Figure~\ref{fig:Simulation}, we simulated a bit-reset process where the energy gap between the two levels increases linearly from $0$ to $40 \, \kB T$ across $100,000$ equal time-steps.
After each step, the partial swap probability of thermalizing is $0.002$, and $\beta=10 \, \kB^{-1} K^{-1}$.
This process was repeated $10,000$ times, and the resulting work values binned into a histogram with $50$ divisions.
The maximum probability for the bit-reset work distribution was $P^{\rm max}=8.60$, satisfying $P^{\rm max} < \beta$.

\end{document}